\newcommand{\id}{\hat{1}}
\newtheorem{lemma}{Lemma}[section]
\newtheorem{prop}[lemma]{Proposition}
\newtheorem{theorem}[lemma]{Theorem}
\newtheorem{cor}[lemma]{Corollary}
\newtheorem{rem}[lemma]{Remark}
\newtheorem{exam}[lemma]{Example}
\newtheorem{cexam}[lemma]{Counterexample}
\newcommand{\Hil}{\mathcal{H}}
\newcommand{\ketbra}[1]{|{#1}\rangle\langle{#1}|}
\newcommand{\NN}{\mathbb{N}}
\newcommand{\MM}{\mathbb{M}}
\newcommand{\RR}{\mathbb{R}}
\newcommand{\BB}{\mathbb{B}}
\newcommand{\E}{\mathcal{E}}
\renewcommand{\S}{\mathcal{S}}
\newcommand{\pl}{\hspace{.1cm}}
\renewcommand{\L}{\mathcal{L}}
\newcommand{\M}{\mathcal{M}}
\newcommand{\N}{\mathcal{N}}
\newcommand{\D}{\mathcal{D}}
\newcommand{\Id}{\mathrm{Id}}
\newcommand{\Ha}{H}
\newcommand{\tr}{\text{tr}}
\newcommand{\epow}[2]{F^{({#2})}_{{#1}}}
\newcommand{\A}{\mathcal{A}}
\newcommand{\B}{\mathcal{B}}
\newcommand{\C}{\mathcal{C}}
\newcommand{\X}{\mathcal{X}}
\newcommand{\mightbealign}{}
\newcommand{\mightbenewline}{}
\begin{document}

\title[Self-Restricting Noise and Entropy Decay]{Self-restricting Noise and Exponential Relative Entropy Decay Under Unital Quantum Markov Semigroups}

\author[N. LaRacuente]{Nicholas LaRacuente}
\affiliation{Indiana University Bloomington, Bloomington, IN 47408, USA}



 \maketitle
 
 \begin{abstract}
States of open quantum systems often decay continuously under environmental interactions. Quantum Markov semigroups model such processes in dissipative environments. It is known that finite-dimensional quantum Markov semigroups with GNS detailed balance universally obey complete modified logarithmic Sobolev inequalities (CMLSIs), yielding exponential decay of relative entropy to a subspace of fixed point states. We analyze continuous processes that combine dissipative with Hamiltonian time-evolution, precluding this notion of detailed balance. First, we find counterexamples to CMLSI-like decay for these processes and determine conditions under which it fails. In contrast, we prove that despite its absence at early times, exponential decay re-appears for unital, finite-dimensional quantum Markov semigroups at finite timescales. Finally, we show that when dissipation is much stronger than Hamiltonian time-evolution, the rate of eventual, exponential decay toward the semigroup's decoherence-free subspace is bounded inversely in the decay rate of the dissipative part alone. Dubbed self-restricting noise, this inverse relationship arises when strong damping suppresses effects that would otherwise spread noise beyond its initial subspace.
\end{abstract}



 \section{Introduction}
 A primary challenge for quantum information theory is to understand noise-induced decay of quantum states, estimate rates, and invent strategies to improve control. Decay and decoherence are major challenges for quantum computing \cite{clerk_introduction_2010, preskill_quantum_2018} and ultimately constrain rates of quantum communication \cite{wilde_quantum_2017}. Related questions concern the emergence of classicality \cite{rebolledo_decoherence_2005} and the thermal equilibration rates of open quantum systems \cite{bardet_rapid_2023, bluhm_exponential_2022}. A common intuition is that without error correction or other protection, information stored in quantum systems decays exponentially. A broad task at the intersection of mathematical physics and quantum information is to show theoretically when exponential decay is inevitable, and when it is not.

Existing theory shows tensor-stable, exponential decay for a range of open quantum systems. Recall the notion of a quantum Markov semigroup (QMS): a family of parameterized quantum channels (completely positive, trace-preserving maps) $(\Phi^t)_{t \geq 0}$ for which $\Phi^t \circ \Phi^s = \Phi^{s + s}$ for all $t,s \geq 0$. A QMS has a Lindbladian generator given by $\L(\rho) = - i [\Ha, \rho] + \S(\rho)$ \cite{lindblad_generators_1976, gorini_completely_1976}, where $\Ha$ is a Hamiltonian (Hermitian operator) and $\S$ a linear map described as a dissipator, such that $\Phi^t = \exp( - \L t)$. We also recall Umegaki's relative entropy given in finite dimension by $D(\rho \| \omega) = \tr(\rho(\log \rho - \log \omega))$ \cite{umegaki_conditional_1962} for density matrices $\rho$ and $\omega$.  Earlier results \cite{gao_complete_2022, gao_geometric_2021, junge_stability_2022} show a particularly strong form of exponential relative entropy decay known as a complete modified logarithmic Sobolev inequality (CMLSI) for all QMSs obeying a detailed balance condition. We recall \cite{gao_fisher_2020} that a QMS has $\lambda$-CMLSI for $\lambda > 0$ toward subspace projection $\E$ if
\[ D((\Phi^t \otimes \Id^B)(\rho) \| ((\Phi^t \circ \E) \otimes \Id^B)(\rho)) \leq e^{-\lambda t} D(\rho \| (\E \otimes \Id^B)(\rho))  \]
for all $t \geq 0$, where $B$ is an arbitrary, finite-dimensional auxiliary system, and $\rho$ is any input density on the combined system. CMLSI is a tensor-stable strengthening of the longer-known modified logarithmic Sobolev inequality (MLSI) introduced in \cite{kastoryano_quantum_2013}. In the detailed balance setting, $\E$ is a uniquely defined conditional expectation projecting to the fixed point subspace of the QMS. In general, when a QMS has a fixed point subspace to which it induces decay, we define CMLSI with respect to the projection to that subspace. If a QMS induces persistent rotation without decay (for example, if it is unitary), then it may be more appropriate to regard $\E$ as a decoherence-free subspace projection such that $\Phi^t \circ \E = \E \circ \Phi^t =  R^t \circ \E = \E \circ R^t$ for some unitary rotation group $R^t$.

Detailed balance (as formally recalled Section \ref{sec:background}) applies primarily to processes that might be considered entirely dissipative, having no non-trivial Hamiltonian part. Here we aim to extend beyond the detailed balance setting by incorporating non-trivial, system-internal Hamiltonian time-evolution alongside dissipation. When Hamiltonian dynamics do not commute with the dissipator's fixed point subspace projection, they reduce the fixed point subspace. Early-time decay, however, is still toward the original fixed point subspace up to linear order in time. Hence some information that would be eventually destroyed may escape initial exponential decay.
\begin{prop}[Introduction Version of Proposition \ref{prop:decayhaml}] \label{prop:main1intro}
Let $(\Phi^t)$ be a unital, finite-dimensional QMS with generator $\L(\rho) = i [\Ha, \rho] + \S(\rho)$.
Assume $\Phi_0^t := \exp(-t \S)$ has $\lambda_0$-CMLSI and fixed point conditional expectation $\E_0$.
\begin{itemize}
    \item If time-evolution via $\Ha$ commutes with $\E_0$, then $\L$ has $\lambda_0$-CMLSI toward $\E_0$.
    \item If time-evolution generated by $\Ha$ does not commute with $\E_0$, then $(\Phi^t)$ has a unique decoherence-free subspace projection $\E \neq \E_0$, and $\L$ does not have CMLSI.
\end{itemize}
\end{prop}
Though CMLSI fails generically at short times, longer times may still effectively yield CMLSI-like behavior. The second main result herein is that for unital QMSs, exponential relative entropy decay re-appears with respect to any finite timescale, albeit with a timescale-dependent constant. 
\begin{theorem}[Introduction Version of Theorem \ref{thm:maindecay}] \label{thm:introdecay}
Let $(\Phi^t)_{t \geq 0}$ be a finite-dimensional, unital QMS. For any $\tau > 0$, there exists a constant $\epsilon_\tau < 1$ and unique decoherence-free subspace projection $\E$ for which
\begin{equation*}
\begin{split}
& D((\Phi^t \otimes \Id^B)(\rho) \| ((\E \circ \Phi^{t}) \otimes \Id^B)(\rho)) \mightbenewline \mightbealign
 \leq \epsilon_\tau^{\lfloor t / \tau \rfloor} D(\rho \| (\E \otimes \Id^B)(\rho))
\end{split}
\end{equation*}
for all $t > 0$ and input densities $\rho$, where $\lfloor \cdot \rfloor$ denotes the floor function, and $B$ is any finite-dimensional auxiliary system.
\end{theorem}
Theorem \ref{thm:introdecay} shows that even without true CMLSI, analogous forms of relative entropy decay are universal for unital, finite-dimensional QMSs. The constant $\epsilon_\tau$ is defined explicitly in the Theorem's technical version, Theorem \ref{thm:maindecay}. When decay toward the dissipator's fixed point subspace is however extremely fast, it induces Zeno dynamics, actually suppressing the spread of noise that would otherwise occur via Hamiltonian interactions. Since the channel $\E$ in Theorem \ref{thm:zvscmlsi} is a decoherence-free subspace projection, it is known that when the generator is written in form
\begin{equation} \label{eq:lindform}
    \L(\rho) = i [\Ha, \rho] + \frac{1}{2} \sum_{j \geq 1} (L_j^* L _j \rho - 2 L_j^* \rho L_j + \rho L_j^* L _j)
\end{equation}
for which the $(L_j)$ are traceless and orthonormal, a one-parameter unitary subgroup $R^t$ is generated by $\Ha$ \cite{dhahri_decoherence-free_2010, fagnola_role_2019} such that $\Phi^t \E = \E \Phi^t = R^t \E = \E R^t$.
\begin{theorem}[Introduction Version of Theorem \ref{thm:zvscmlsi}] \label{thm:zvscmlsiintro}
    Let $\L(\cdot) = i[\Ha, \cdot] + \S(\cdot)$ generate QMS $(\Phi^t)$ with decoherence-free subspace projection $\E$. Assume that $\S$ generates QMS $(\Phi_0^t)$ with $\lambda_0$-CMLSI toward fixed point conditional expectation $\E_0 \neq \E$.

    Then there are constants $c_0, c_1$, and $\delta$ scaling logarithmically with the system dimension or subalgebra indices of $\E$ and $\E_0$ such that for any given $\tau \in (0,1)$ and input $\rho$ maximizing $D(\E_0(\rho) \| \E(\rho))$,
    \[ D(\Phi^t(\E_0(\rho)) \| \Phi^t \circ \E(\rho)) \geq e^{-\lambda t} D(\E_0(\rho) \| \E(\rho)) \]
    at $t = \tau \times \lambda_0 c_0 / \| \Ha \|_{\infty}^2 \delta$ with
     \begin{equation*}
     \begin{split}
         \lambda \leq \frac{\| \Ha \|_{\infty}^2}{\lambda_0} \times \frac{\delta }{\tau c_0} \ln \Big ( \frac{c_1}{2 (1-\tau)^2 c_0^2}  \Big ) \pl.
     \end{split}
     \end{equation*}
\end{theorem}
Theorem \ref{thm:zvscmlsiintro} upper-bounds the rate of exponential decay. The title phenomenon of this paper, self-restricting noise, is that $\lambda$ is upper-bounded \textit{inversely} in $\lambda_0$. A dissipator inducing faster decay can actually slow CMLSI-like decay at intermediate timescales. The particular intermediate timescale considered by the Theorem depends on the free parameter $\tau$. Decay at times up to $\lambda_0 c_0 / \| \Ha \|_{\infty}^2 \delta$ as in the Theorem is controlled. In contrast, the bound on $\lambda$ induced by $\lambda_0$ also becomes weak at very short times. This is not because the decay rate becomes unconstrained at short times but because it is upper-bounded by \ref{prop:decayhaml} independently from $\lambda_0$ - here the restriction on linear-order decay arises not from self-restricting noise, but from the initial entropy decay having no linear-order component.  
The Theorem follows Proposition \ref{prop:normconverse}, which upper bounds trace distance decay rates toward $\E$ and might be of independent interest. Unlike previously noted results, Theorem \ref{thm:zvscmlsiintro} does not require unitality. Explicit constants are given in the technical version, \ref{thm:zvscmlsi}. In Section \ref{sec:examples}, examples illustrate Theorem \ref{thm:zvscmlsiintro}. Such an inverse relationship in CMLSI constants is needed to account for the intermediate-timescale regime of effective Zeno dynamics as noted in \cite{popkov_effective_2018}.

\section{Background and Notations} \label{sec:background}
By $\BB(\Hil)$ we denote the space of bounded operators on Hilbert space $\Hil$, and by $\D(\Hil)$ we denote the space of densities. By $\id$ we denote the identity matrix. For a unitary matrix $U$, we denote by $R_U$ the channel that applies that unitary matrix via conjugation such that $R_U(X) := U X U^\dagger$ for any $X \in \BB(\Hil)$. If there is a family of unitaries $(U_j)$, we may denote $R_j := R_{U_j}$ when it is clear from context. Similarly, we sometimes denote $R^t := R_{\exp(- i \Ha t)}$. A quantum channel is a completely positive, trace-preserving map on density matrices, usually denoted by $\Phi, \Psi,$ or $\Theta$. For products of quantum channels, we use concatenation to denote composition or the ``$\circ$" symbol to optimize readability, e.g. $\Phi \Psi (\rho) = \Phi(\Psi(\rho)) = \Phi \circ \Psi(\rho)$ for a density matrix $\rho \in \D(\Hil)$. By $\Id$ we denote the identity superoperator (which is a quantum channel).

For a normal, faithful state $\omega \in \BB(\Hil)_* \cong \D(\Hil)$, the GNS inner-product with respect to $\omega$ is defined for $x, y \in \BB(\Hil)$ as 
\begin{equation} \label{eq:tracegns}
    \braket{x,y}_{\omega} = \tr(\omega x^\dagger y)
\end{equation}
in a tracial setting. We say that
a map $\Phi$ has \textit{GNS detailed balance} when $\Phi$ is self-adjoint with respect to the GNS inner product for an invariant density $\omega$. 
We say that a map is trace-symmetric if it is GNS self-adjoint with respect to the completely mixed state. By $\Phi^*_\omega$ we denote the adjoint of $\Phi$ under the GNS inner product with state $\omega$. We denote by $\Phi^* : \BB(\Hil) \rightarrow \BB(\Hil)$ the adjoint of $\Phi$ with respect to the trace-induced inner product and by $\Phi_*$ the often equivalent pre-adjoint.

Conditional expectations are quantum channels that are idempotent and self-adjoint with respect to a GNS inner product. In particular, when a conditional expectation $\E$ is trace-symmetric,
\begin{equation}
    \tr(\E(X) Y) = \tr(\E(X) \E(Y))
\end{equation}
for any operators $X$ and $Y$. In general,
\begin{equation} \label{eq:weightblock}
\E(\rho) = \oplus_i (\tr_{i} (P_i \rho^{A_i} P_i) \otimes \sigma_i^{B_i}) \pl,
\end{equation}
where $P_i$ projects to a diagonal block of the original matrix, $\tr_i$ is a partial trace on the $i$th block and $\sigma_i$ is a positive matrix on the $i$th block. In keeping with the physics convention to study Schr\"odinger-picture channels on densities, we denote as the conditional expectation such a channel. In the more mathematical convention, our object of study would be the predual of the conditional expectation. For unital channels, and in most cases we consider, these coincide. See \cite{gao_relative_2020} and \cite[Section 2.1]{gao_complete_2022} for more on conditional expectations.

\subsection{Ways to Compare States and Channels}
To study decay of quantum states, we recall several quantitative notions of distance or similarity.

By $\rho \geq \sigma$, we mean that $\rho$ is greater than $\sigma$ in the Loewner order: $\rho - \sigma$ is a non-negative matrix. We say for a pair of superoperators $\Phi, \Psi : \D(\Hil) \rightarrow \D(\Hil)$ that $\Phi \geq_{cp} \Psi$ iff
\[ (\Phi \otimes \Id^B)(\rho) \geq (\Psi \otimes \Id^B)(\rho) \]
for all input densities $\rho$ and finite-dimensional auxiliary systems $B \cong \MM_n$ such that $n \in \NN$. We call this and the associated symbols $<_{cp}, \leq_{cp}$, and $>_{cp}$ cp-order relations. Via the Choi-Jamiolkowski isomorphism, a finite-dimensional quantum channel is fully defined by its action on one side of a maximally entangled state between its input space and an auxiliary space of the same dimension. Hence $\Phi \geq_{cp} (1-\epsilon) \Psi$ if and only if $\Phi = (1-\epsilon) \Psi + \epsilon \Theta$ for some $\epsilon \in (0,1)$ and other channel $\Theta$. Therefore, one may think of cp-order as an especially strong notion of comparison in which one process is a probabilistic combination involving the other. Physically, one could envision that if $\Phi \geq_{cp} (1-\epsilon) \Psi$, then $\Phi$ is a process that applies $\Psi$ with probability $1-\epsilon$ and some other channel with probability $\epsilon$. We also recall the Pimsner-Popa indices \cite{pimsner_entropy_1986}, which we denote for conditional expectations $\E_0$ and $\E$:
\begin{equation} \label{eq:pimsnerpopa}
\begin{split}
C(\E_0 \| \E) := & \inf \{c > 0 | \E_0(\rho) \leq c \E(\rho) \forall \rho \in \M_* \} \\ C_{cb}(\E_0 \| \E) := & \inf \{c > 0 | \E_0(\rho) \leq_{cp} c \E(\rho) \forall \rho \in \M_* \}
\end{split}
\end{equation}
as considered in \cite{gao_relative_2020, gao_complete_2022} and originally by Pimsner and Popa \cite{pimsner_entropy_1986} as a finite-dimensional analog of the Jones index \cite{jones_index_1983}. We denote $C(\E) := C(\Id \| \E)$ and $C_{cb}(\E) := C_{cb}(\Id \| \E)$. When $\E$ is a GNS self-adjoint conditional expectation with respect to density $\omega$, $C_{cb}(\E) \leq d \omega_{\text{min}}^{-1}$, where $d$ is the dimension of the space and $\omega_{\text{min}}$ the minimum eigenvalue of $\omega$. Hence for unital $\E$ in dimension $d$, $C_{cb}(\E) \leq d^2$. Via \cite[Theorem A]{gao_relative_2020},
\begin{equation} \label{eq:pprelent}
\begin{split}
    \log C(\E_0 \| \E) & = \sup_\rho \inf_\sigma D(\E_0(\rho) \| \E(\sigma)) = \sup_\rho D(\E_0(\rho) \| \E(\rho)) \pl,
\end{split}
\end{equation}
and one may easily see that the same holds under extensions of $\E_0$ and $\E$ to include a sufficiently large auxiliary system when $C(\E_0 \| \E)$ is replaced by $C_{cb}(\E_0 \| \E)$. Though the Pimsner-Popa index is traditionally defined for conditional expectations, Equation \eqref{eq:pimsnerpopa} may still be meaningful for other channels that expand the support of their inputs. 

We denote the Schatten norms $\|\cdot\|_p$ for $p \in [0,\infty]$. The trace distance is given by
\[ d_{tr}(\rho, \sigma) := \frac{1}{2} \|\rho - \sigma\|_1 \pl.\]
In general, for a pair of spaces $\A$ and $\B$ with respective norms $\|\cdot\|_\A$ and $\|\cdot\|_\B$, the $\A \rightarrow \B$ norm on maps from $\A$ to $\B$ is given by
\[ \|\Phi\|_{\A \rightarrow \B} := \sup_{\rho \in \A} \frac{\| \Phi(\rho)\|_{\B}}{\|\rho\|_\A} \pl. \]
Within finite dimension and when $\A$ and $\B$ have the same norm, we denote $\|\Phi\|_{\A \rightarrow \B, cb} := \sup_{\C} \|\Phi \otimes \Id^C \|_{\A \otimes \C \rightarrow \B \otimes \C}$, where $\C$ is an extension over finite-dimensional extensions with that norm. We denote $\|\Phi\|_{p \rightarrow q, cb} = \|\Phi\|_{\A \rightarrow \B, cb}$ in which $\| \cdot \|_\A = \|\cdot\|_p$ and $\|\cdot\|_\B = \|\cdot\|_q$. We call a map $\Phi$ an $\A \rightarrow \B$ contraction or describe it as contractive if $\|\Phi\|_{\A \rightarrow \B} \leq 1$. When $\Phi$ is a map from a normed Banach space $\A$ to itself, we denote $\|\Phi\| := \|\Phi\|_{\A \rightarrow \A}$.

The diamond norm on a map $\Phi$ defined as $\|\Phi\|_{\Diamond} := \|\Phi\|_{1 \rightarrow 1, cb}$ is a standard notion of distance between quantum channels \cite[Definition 9.1.3]{wilde_quantum_2017}. We often use diamond norm distance $\| \Phi - \Psi \|_{\Diamond}$ to compare quantum channels $\Phi$ and $\Psi$. We also characterize similarity of quantum densities using the Uhlmann fidelity given by $F(\rho, \omega) := \tr(\sqrt{\sqrt{\rho} \omega \sqrt{\rho}})^2$.

We recall the quantum relative entropy given by
\[ D(\rho \| \sigma) := \tr(\rho \log \rho - \rho \log \sigma) \pl. \]
for two densities $\rho, \sigma \in \D(\Hil)$ as introduced by Umegaki \cite{umegaki_conditional_1962}. The quantum relative entropy obeys the well-known data processing inequality \cite{lindblad_completely_1975}, that $D(\Phi(\rho) \| \Phi(\sigma)) \leq D(\rho \| \sigma)$ for every channel $\Phi$ and all input densities $\rho$ and $\sigma$. It is also bi-convex in its arguments, lower semicontinuous, and extensive under tensor products. An important property known as the chain rule is that if $\E$ and $\E_0$ are conditional expectations for which $\E \E_0 = \E_0 \E = \E$, then
\begin{equation} \label{eq:chainrule}
    D(\rho \| \E(\rho)) = D(\rho \| \E_0(\rho)) + D(\E_0(\rho) \| \E(\rho)) \pl.
\end{equation}
For (in)equlities involving linear combinations or ratios of entropies and logarithms, the logarithm base often does not matter, as the same (in)equality holds when multiplying both sides by an arbitrary constant. In these scenarios, we will denote the logarithm without specifying a base. In contexts where the base matters, we denote ``$\log_b$" with base $b$ and ``$\ln$" for the natural logarithm. Though the relative entropy is not a true distance measure, being asymmetric in its arguments, it is nonetheless often useful as a distance-like notion and has desirable properties as discussed in Subsection \ref{sec:decaybackground}.

Pinsker's inequality \cite[Theorem 10.8.1]{wilde_quantum_2017} upper bounds the trace distance in terms of relative entropy:
\begin{equation} \label{eq:pinsker}
    d_{tr}(\rho, \sigma) \leq \sqrt{\frac{\ln 2}{2} D(\rho \| \sigma)}
\end{equation}
when the relative entropy is in bits, hence with respect to the logarithm with base 2. There are several upper bounds on relative entropy in terms of the trace distance in special cases, in particular \cite[Proposition 3.7]{gao_relative_2020} and \cite[Lemma 7]{winter_tight_2016}. Here we recall a version of these bounds with a simplification given by \cite[Lemma 5.8]{bluhm_continuity_2023} and the corrections noted in \cite{bluhm_corrections_2024} based on \cite[Theorem 8]{lin_divergence_1991}:
\begin{lemma} \label{lem:wsb}
Let $\rho, \omega$ be densities and $\E$ a trace-symmetric conditional expectation. If $d_{tr}(\rho, \omega) \leq \epsilon$, and the relative entropy is defined with the logarithm in base 2, then
\begin{equation*}
\begin{split}
|D(\rho \| \E(\rho)) - D(\omega \| \E(\omega))| 
 & \leq \epsilon \log_2 C(\E) + (1+\epsilon) h_2\Big ( \frac{\epsilon}{1 + \epsilon} \Big )  
  \leq \epsilon \log_2 C(\E) + 2 \sqrt{\epsilon} \pl,
\end{split}
\end{equation*}
where $h(p) := - p \log p - (1-p) \log (1-p)$ is the binary entropy for $p \in [0,1]$, and $h_b$ specifies the logarithm base.
\end{lemma}
Furthermore, when $\E$ is a trace-symmetric conditional expectation,
\[D(\rho \| \E(\rho)) = S_{vn}(\E(\rho)) - S_{vn}(\rho) \]
for the von Neumann entropy $S_{vn} : \D(\Hil) \rightarrow \RR^+$. Many common information-theoretic quantities are special cases of relative entropy. For example, the mutual information $I(A:B)_\rho = D(\rho \| \rho^A \otimes \rho^B)$, where $\rho$ is a bipartite state on system $A \otimes B$. Other examples include the conditional entropy, coherent information, and various operational expressions for communication capacity \cite{wilde_quantum_2017}. Many operational quantities involve regularization over a large number of copies. For example, the capacityto transmit classical bits over many copies of a quantum channel $\Phi : A \rightarrow B$ is given by
\[ C_{\text{cl}}(\Phi) = \lim_{n \rightarrow \infty} \sup_{\rho \in (\X \otimes A)^{\otimes n}} I(\X^{\otimes n} : B^{\otimes n})_{(\Id^\X \otimes \Phi)(\rho)}  \pl, \]
where $\X$ denotes a classical register. That the capacity may depend on outputs being entangled across instances \cite{hastings_superadditivity_2009} motivates the use of the tensor-stable ``complete" orders and inequalities.

We also recall the Petz recovery map for channel $\Phi$ with state $\omega$,
\begin{equation} \label{eq:petzmap}
\Phi_{* \omega}(\cdot) := \omega^{1/2} \Phi_* \big (\Phi(\omega)^{-1/2} \cdot \Phi(\omega)^{-1/2} \big ) \omega^{1/2}
\end{equation}
as constructed in \cite{petz_sufficiency_1988}, where $\Phi_*$ is the adjoint under the inner product with respect to the trace as in Equation \eqref{eq:tracegns}. It was shown therein as well that $\Phi_{* \omega} \circ \Phi(\omega) = \omega$. The key property of the Petz map is that for any input density $\rho$,
\[ D(\Phi(\rho) \| \Phi(\omega)) = D(\rho \| \omega) \text{ if and only if } \Phi_{* \omega} \circ \Phi(\rho) = \rho \pl. \]

\subsection{Decay Under Quantum Channels} \label{sec:decaybackground}
There are many ways to quantify and model decay of quantum states, often closely related to the theory of decoherence \cite{clerk_introduction_2010, schlosshauer_quantum_2019}, and several known strategies to forestall it \cite{suter_colloquium_2016}. A recurring question is when quantum information decays exponentially, a general but not entirely universal scenario \cite{beau_nonexponential_2017}.

The Stinespring dilation models a quantum channel as unitary evolution on a system plus its initially uncorrelated environment, followed by a partial trace of the environment:
\begin{equation} \label{eq:stinespring}
    \Phi(\rho) = \tr_E (R_U(\rho \otimes \ketbra{0}^E)) \pl,
\end{equation}
where $E$ is the environment system, $\ketbra{0}^E$ is an arbitrary initial state (which without loss of generality, we may assume pure), and $U$ is a unitary on the joint system-environment. This model is however extremely general. For example, one may embed complex or even universal quantum computations in a non-Markovian environment \cite{aloisio_sampling_2023}. To effectively study decay properties, we usually must impose additional assumptions. 

In this work, we assume Markovianity of a time-parameterized family of quantum channels. Let $\Phi^{r,s}$ denote a channel for each $s \geq r \geq 0$ such that
\begin{equation} \label{eq:divis}
    \Phi^{s,t} \circ \Phi^{r,s} = \Phi^{r,t}
\end{equation}
whenever $t \geq s \geq r \geq 0$. Equation \eqref{eq:divis} defines the property of \textit{divisibility} in time. We assume that $\Phi^{t,t} = \Id$ for any $t \geq 0$, reflecting that time-evolution for zero time has no effect on the state. Moreover, we assume differentiability, recalling the potentially time-varying \textit{Lindbladian}
\[ \L[t] := \frac{d}{ds} \Phi^{t,t + s} \Big |_{s=0} \pl. \]
The Lindbladian decomposes into a Hamiltonian and a dissipative part,
\begin{equation} \label{eq:ldecomp}
    \L[t](\rho) = i [\Ha[t], \rho] + \S[t](\rho) \pl,
\end{equation}
where $\Ha$ is a Hamiltonian that would on its own generate unitary time-evolution. Canonical decompositions into Hamiltonian and dissipative parts exist \cite{gorini_completely_1976, hayden_canonical_2022}. We also consider discrete compositions of channels $\Phi_1 \circ ... \circ \Phi_k$. A QMS denoted $(\Phi_t)^{t \geq 0}$ further assumes the semigroup property: $\Phi^t \circ \Phi^s = \Phi^{t + s}$. A QMS generally admits a time-independent Lindbladian as derivative, such that $\Phi^t = \exp(- \L t)$. An often-studied problem is to derive a model of divisible time-evolution as a coarse-grained description of dynamics induced by general system-environment interactions as in Equation \eqref{eq:stinespring}. The Markovian approximation is usually derived for dissipative environments, in which information lost to the environment is quickly spread to eliminate backaction of that information on the system's subsequent environment \cite{merkli_dynamics_2022-1}.

In general, a dissipator as in Equation \eqref{eq:ldecomp} has the form
\[ \S(\rho) = \sum_i \gamma_i \Big (\frac{1}{2} (\S_i^* \S_i \rho + \rho \S_i^* \S_i) - \S_i \rho \S_i^* \Big ) \pl, \]
where each $\S_i$ is known as a jump operator. The coefficients $(\gamma_i)$ are positive and known as damping rates, controlling the strengths of different subprocesses. A particuarly common case we will study in several places herein is that of depolarizing noise, which replaces an input density by complete mixture. The Lindbladian for depolarizing noise, $\S_{dep}$, is given by
\begin{equation} \label{eq:depdisp}
\begin{split}
S_{dep}(\rho) & := \sum_{i=0}^3 \frac{1}{2} (X_i X_i^* \rho + \rho X_i X_i^*) -  X_i \rho X_i
\mightbenewline \mightbealign
 = \rho - \frac{\id}{d}
\end{split}
\end{equation}
in dimension $d$, where $X_0 = \hat{1}, X_1 = X, X_2 = Y$ and $X_3 = Z$ are the Pauli matrices normalized to have eigenvalues ${}^+_-1$. The depolarizing Lindbladian generates the time-parameterized depolarizing QMS,
\begin{equation} \label{eq:depsemi}
\Phi^t_{\mathrm{dep}(\lambda)}(\rho) := e^{- \lambda \S}(\rho) = e^{-\lambda t} \rho + (1 - e^{- \lambda t}) \frac{\id}{d} \pl.
\end{equation}
where $\lambda \geq 0$ controls the decay rate.

There are many ways to quantify the extent to which a quantum system has decayed. A strong notion from open systems research is the modified logarithmic Sobolev inequality (MLSI) introduced by Kastoryano and Temme \cite{kastoryano_quantum_2013}. A QMS $(\Phi^t)$ satisfies $\lambda$-MLSI if
\[ D(\Phi^t(\rho) \| \Phi^t \circ \E(\rho)) \leq e^{-\lambda t} D(\rho \| \E(\rho)) \]
for all inputs $\rho$. Classically stochastic versions of this inequality appear in earlier literature \cite{arnold_logarithmic_1998, bobkov_modified_2003}. MLSI differs from but was inspired by the earlier notion of a logarithmic Sobolev inequality \cite{gross_logarithmic_1975, gross_hypercontractivity_1975}, which states that for every $x$ in the domain of the Dirichlet form $x \mapsto \tr(x^\dagger \L x)$ intersected with the original algebra on which $\L$ is defined:
\begin{equation}
    \lambda \tr \big ( |x|^2 \ln |x|^2 - \E(|x|^2) \ln \E(|x|^2) \big ) \leq 2 \tr(x^\dagger \L x) \pl.
\end{equation}
It was more recently noted that the original log-Sobolev inequality fails for QMSs that lack a unique fixed point state \cite{bardet_hypercontractivity_2022}. In contrast the modified version remains valid \cite{bardet_estimating_2017} in the presence of more general fixed-point subspaces, generalizing from decay toward a fixed-point state to decay toward a fixed-point subspace. This observation motivates the strengthening known as CMLSI \cite{gao_fisher_2020}, which automatically incorporates a potentially untouched auxiliary system of unspecified dimension. A finite-dimensional QMS $(\Phi^t)$ generated by $\L$ with fixed-point subspace projection $\E$ has $\lambda$-CMLSI if
\begin{equation}
\begin{split}
   & D\big ( \big (\Phi^t \otimes \Id^B \big ) (\rho) \big \| \big (\Phi^t \otimes \Id^B \big ) \circ (\E \otimes \Id^B) (\rho) \big )
   \mightbenewline \mightbealign \leq e^{-\lambda t} D(\rho \| (\E \otimes \Id^B ) (\rho)) 
\end{split}
\end{equation}
for all extensions by a finite-dimensional auxiliary subsystem $B$. A discrete analog to CMLSI is the complete, strong data processing inequality (CSDPI), which for decay to a fixed point takes the form:
\[ D((\Phi \otimes \Id^B)(\rho) \| ((\Phi \circ \E) \otimes \Id^B)(\rho))
    \leq \lambda D(\rho \| (\E \otimes \Id^B)(\rho)) \]
for $\lambda \in [0,1)$. As some particularly useful properties of CMLSI:
\begin{itemize}
    \item For a time-varying Lindbladian and associated divisible family of channels with the same fixed point projection, divisibility also applies to estimated decay rates. Assume that for each $t \geq 0$, a time-parameterized Lindbladian $\L[t]$ generates a QMS $s \mapsto \exp(-\L[t] s)$ with time-independent fixed-point subspace projection $\E$ and $\lambda[t]$-(C)MLSI. Then
    \begin{equation} \label{eq:divisdecay}
    \begin{split}
        D(\Phi^{s,t}(\rho) \| \Phi^{s,t} \circ \E(\rho))
        \leq e^{ - \int_s^t \lambda[r] d r } D(\rho \| \E(\rho))
    \end{split}
    \end{equation}
    for all input densities $\rho$. Relatedly, (C)MLSI implies decay from arbitrarily small initial relative entropy.
    \item CMLSI and CSDPI are tensor-stable: if $(\Phi^t)$ and $(\Psi^t)$ are QMSs with $\lambda_1$-CMLSI and $\lambda_2$-CMLSI respectively, then $(\Phi^t \otimes \Psi^t)$ has $\min\{\lambda_1, \lambda_2\}$-CMLSI.
\end{itemize}
Via Pinsker's inequality as recalled in Equation \eqref{eq:pinsker}, decay under CMLSI at sufficiently long times upper-bounds the diamond-norm distance between a channel and its fixed-point subspace projection. Tensor-stability enables CMLSI to upper-bound regularized capacities \cite{bardet_group_2021}, which are defined as optimizing over potentially entangled inputs between many copies of a quantum channel. CMLSI and CSDPI can also upper bound extends of quantum advantage in certain settings \cite{stilck_franca_limitations_2021}.

CMLSI is known to hold with some constant for all finite-dimensional QMSs having GNS detailed balance with respect to a faithful state and a uniquely defined fixed-point subspace conditional expectation $\E$. This result was derived in a self-contained way in \cite{gao_complete_2025}. It was simultaneously derived in \cite{gao_geometric_2021} for QMSs that are self-adjoint with respect to the trace, which via \cite{junge_stability_2022} also extends to all finite-dimensional QMSs with detailed balance. Furthermore, it was shown in \cite{gao_complete_2025} that CSDPI holds for quantum channels with GNS detailed balance and some additional, commonly satisfied assumptions.

Follow-up works have shown applications in the theory of many-body systems \cite{bardet_entropy_2024,bardet_rapid_2023, chen_fast_2023}. The tensor-stability and divisibility of CMLSI, as well as some technical properties of the relative entropy \cite{laracuente_quasi-factorization_2022}, allowed many of these works to build up optimal decay rate bounds on whole systems from the existence of CMLSI constants for single-site constituents.

A primary limitation of known CMLSI results is that detailed balance is usually incompatible with the presence of a non-trivial Hamiltonian term as in Equation \eqref{eq:ldecomp}. For example, we may consider detailed balance with respect to complete mixture, which reduces to self-adjointness under the trace. Even if the dissipator $\S$ is self-adjoint, we still find that
\[ \L_*(\rho) = - i [\Ha, \rho] + \S(\rho) \pl, \]
reversing the sign of the Hamiltonian term compared with $\L$ as in Equation \eqref{eq:ldecomp} (physically, the unitary part of the process is time-reversed). In general, $- \Ha \neq \Ha$ and generates distinct continuous-time evolution. Motivating examples of where CMLSI estimates fail are presented in Section \ref{sec:examples}.

\section{Main Results and Proofs} \label{sec:theory}
For a channel $\Phi$ in finite dimension defined on the predual of a von Neumann algebra $\M$, the multiplicative domain (properly of $\Phi^*$, its adjoint with respect to the trace) is
\begin{equation*}
\begin{split}
\N_\Phi := \{ & x \in \M : \Phi^*(y) \Phi^*(x) = \Phi^*(y x) ,
 \mightbenewline \mightbealign
 \text{ and } \Phi^*(x) \Phi^*(y) = \Phi^*(x y) \text{ } \forall y \in \M \} \pl.
\end{split}
\end{equation*}
The multiplicative domain $\N_\Phi$ is a subalgebra, and as such, it comes with a unique projection $\E_{\Phi} : \M \rightarrow \N_\Phi$, which is a conditional expectation. Multiplicative domains have been previously applied in the theory of quantum error correction for unital quantum channels, in which they often act as decoherence-free subspaces. Leveraging this connection and \cite{kribs_quantum_2006}, results of \cite{choi_multiplicative_2009} (later generalized in \cite{johnston_generalized_2011}) imply that the multiplicative domain $\N_{\Phi}$ coincides with the fixed point subspace of $\Phi \circ \Phi^*$ \cite[Theorem 1.3]{rahaman_multiplicative_2017} when $\Phi$ is unital. Other results have specifically characterized the decoherence-free subspaces of channels and Lindbladians \cite{lidar_decoherence-free_1998, agredo_decoherence_2014, lidar_review_2014, carbone_period_2020}, on which the Lindbladian acts unitarily. Before directly addressing whether decay is exponential, we first use the analytical structure of the semigroup to infer that decay to the decoherence-free subalgebra begins immediately - a unital QMS never admits initially non-decaying states that are not contained in its decoherence-free subspace.

\begin{lemma} \label{lem:analyt}
    Let $(\Phi^t)_{t \geq 0}$ be a QMS with generator bounded in diamond norm and $\rho$ be an input density.
    Assume that $(\Phi^{t})_{t \geq 0}$ has (not necessarily unique) full rank fixed point density $\omega$. If there exists a $\tau > 0$ for which $D(\Phi^{\tau}(\rho) \| \Phi^{\tau}(\omega)) = D(\rho \| \omega)$ on density $\rho$, then for all $t \geq 0$, $\Phi^{t}_{* \omega} \Phi^{t}(\rho) = \rho$, and $D(\Phi^{t}(\rho) \| \Phi^{t}(\omega)) = D(\rho \| \omega)$.
\end{lemma}
\begin{proof}
For all $s > 0$ and every bounded generator $\L$,
\[ \Phi^s_* \circ \Phi^s (\rho)
    = e^{- s \L_*} e^{- s \L} (\rho)
    = \sum_{j,k = 0}^\infty s^{j+k} \frac{(-1)^{j + k}}{j! k!} \L^j_* \circ \L^k (\rho) \]
is manifestly analytic in $s$ within finite dimension and equal to its Taylor series around $s=0$, even after extending its domain from $s \in [0,\infty)$ to the complex plane. Because $\Phi^s_* \circ \Phi^s(\rho)$ is constant on the interval $[0, \tau)$, it must hold that
\[ \frac{d^{(k)}}{ds^{(k)}}
    \big ( \Phi^s_* \circ \Phi^s(\rho) \big ) \Big |_{s=0} = 0 \]
for all $k \in \NN$. Hence $\Phi^s_* \circ \Phi^s(\rho) = \rho$ is constant for all $s \geq 0$. Via the data processing inequality of relative entropy,
\[ D(\rho \| \omega) = D(\Phi^{t}(\rho) \| \Phi^{t}(\omega)) = D(\Phi^{\tau}(\rho) \| \Phi^{\tau}(\omega))\]
for all $t \in [0,\tau]$. Then by the properties of the Petz recovery map as in Equation \eqref{eq:petzmap}, $\Phi^{t}_{*\omega} \Phi^{t}(\rho) = \rho$ for all $t \in [0,\tau]$. Since $\Phi^{t}_{*\omega} \circ \Phi^{t}$ is constant for all $t \in (0, \tau)$, $\Phi_{*\omega}^{t} \circ \Phi^{t}(\rho) = \rho$ for all $t > \tau$.
\end{proof}
\begin{lemma} \label{lem:multdom}
Let $(\Phi^{t})_{t \geq 0}$ be a unital QMS with diamond norm-bounded generator. Let $\E_{\Phi^\tau}$ denote the multiplicative domain projection of $\Phi^{\tau}$. For any $\tau > 0$, $\Phi^{t}_* \circ \Phi^{t} \circ \E_{\Phi^\tau} = \E_{\Phi^\tau}  \circ \Phi^{t}_* \circ \Phi^{t} = \E_{\Phi^\tau}$ for all $t \geq 0$.
\end{lemma}
\begin{proof}
Since $\E_{\Phi^\tau}$ is unital, $\id / d = \E_{\Phi^\tau}(\id / d)$. Recall as noted in \cite{agredo_decoherence_2014, gao_complete_2025} that $\Phi^{\tau}$ is an isometry on its multiplicative domain, so
\begin{equation*}
\begin{split}
D(\E_{\Phi^\tau}(\rho) \| \id / d) & = D(\Phi^{\tau}(\E_{\Phi^\tau}(\rho)) \| \Phi^{\tau}(\id / d) ) \pl.
\end{split}
\end{equation*}
Since $\Phi^{\tau}$ does not decrease relative entropy between $\E_{\Phi^\tau}(\rho))$ and $\id / d$, it is inverted by its Petz recovery map \cite{petz_sufficiency_1988} with respect to $\id/d$ on this subspace, which is equal to $\Phi^{\tau}_*$, its (pre-)adjoint under the trace. Hence $\Phi^{\tau}_* \circ \Phi^{\tau}(\rho) = \rho$. Applying Lemma \ref{lem:analyt}, $\Phi^{t}_* \circ \Phi^{t} \circ \E_{\Phi^\tau}(\rho) = \E_{\Phi^\tau}(\rho)$ for any $t \geq 0$ and input state $\rho$. Since $\E_{\Phi^\tau}$ is self-adjoint with respect to the trace, $(\Phi^{t}_* \circ \Phi^{t} \circ \E_{\Phi^\tau})_* = \E_{\Phi^\tau} \circ \Phi_*^{t} \circ \Phi^{t} = \E_{\Phi^\tau}$.
\end{proof}
\begin{lemma} \label{lem:rotfix}
    Let $(\Phi^t)$ be a unital QMS, either for discrete $t \in \NN$ or continuous $t \in \RR^+$. Assume there exists a conditional expectation $\E$ such that $\Phi^t \E = \E \Phi^t = R^t \E = \E R^t$ for time-dependent unitary rotation $R^t$ and that $\Phi^t$ approaches $\Phi^t \E$ in diamond norm for sufficiently large $t$. Then for any $\tau > 0$, $\E = \E_{\Phi^\tau}$, the multiplicative domain projection of $\Phi^\tau$.
\end{lemma} 
This Lemma follows from results of \cite{kribs_quantum_2006} combined with \cite{lidar_decoherence-free_1998}. For completeness and simplicity, we present an elementary proof here.
\begin{proof}
We may easily extend rotations to negative $t$ by identifying $R^{-t} = R^t_*$, the adjoint map under the trace. Then by our original assumptions, $\| \Phi^t - R^t \E\|_{\Diamond} \leq \epsilon$ for arbitrarily small $\epsilon$ and sufficiently large $t$. Furthermore,
\[ \Phi^t_* \E = (\E \Phi^t)_* = (\Phi^t \E)_* = (R^t \E)_* = (\E R^t)_* = R^{-t} \E \pl. \]
Hence using contractiveness of the diamond norm under quantum channels, $\|\Phi^t_* \Phi^t - \E\|_{\Diamond} \leq \epsilon$. Using Lemma \ref{lem:multdom},
\[ \E \E_{\Phi^\tau} = \lim_{t \rightarrow \infty} (\Phi^t_* \Phi^t) \E_{\Phi^\tau} = 
     \E_{\Phi^\tau} \pl. \]
By self-adjointness of $\E_{\Phi^\tau}$, we also find immediately that $\E_{\Phi^\tau} \E = \E_{\Phi^\tau}$. This part of the proof shows that the subspace projected to by $\E_{\Phi^\tau}$ is contained in that projected to by $\E$.

To complete the proof, we now must show that $\E_{\Phi^\tau} \E = \E$. Let $U^t$ be defined such that $R^t(\cdot) = U^t \cdot U^t_\dagger$. Then for any operators $x, y$ and all $t \geq 0$
\begin{equation*}
\begin{split}
    & \Phi^t (\E(x)) \Phi^t(y) = U^t \E(x) U^t_\dagger \Phi^t(y) 
    \mightbenewline \mightbealign = U^t( \E(x) U^t_\dagger \Phi^t(y) U^t ) U^t_\dagger
        = U^t( \E(x) R^{-t} \Phi^t(y) ) U^t_\dagger \pl.
\end{split}
\end{equation*}
By the assumption that $\Phi^t \E = R^t \E$ and because $R^{-t}$ was defined equal to $R^t_*$, $R^{-t} \Phi^t \E = \E$. Let $R^{-t} \Phi^t$ have the Kraus representation
\[ R^{-t} \Phi^t(\omega) = \sum_i K_i \omega K_i^\dagger \] 
for all inputs $\omega$. Via \cite[Theorem 4.25]{watrous_theory_2018} and for any input $x$, $\E(x)$ being in the fixed point subspace of the unital channel $R^{-t} \Phi^t$ implies that $[\E(x), K_i] = 0$ for every $i$. Since $(R^{-t} \Phi^t)_* = \Phi^t_* R^t$ has the same Kraus decomposition up to the exchange $K_i \leftrightarrow K_i^\dagger$ and also leaves $\E(x)$ invariant, $[\E(x), K_i^\dagger] = 0$ for every $i$ as well. Commutation with the Kraus operators shows that $R^{-t} \Phi^t$ is a bimodule map for the space projected to by $\E$, hence $\E(x) R^{-t} \Phi^t(y) = R^{-t} \Phi^t(\E(x) y)$. After cancelling unitary rotations, we find that $\E$ outputs to a subspace of the multiplicative domain of $\Phi^\tau$, completing the Lemma.
\end{proof}
\begin{lemma} \label{lem:sameE}
    Let $\Phi$ be a unital quantum channel such that for any $\epsilon > 0$, there exists a sufficiently large $n \in \NN$ that for all $k \geq n$, $\|  (\Phi_* \Phi)^k - \E \|_{\Diamond} \leq \epsilon$, and $ \| (\Phi \Phi_*)^k - \tilde{\E} \|_{\Diamond} \leq \epsilon$ for projections $\E$ and $\tilde{\E}$. Further assume that $\| \Phi - \Id\|_{\Diamond} \leq \delta$. Then $\| \E - \tilde{\E}\|_{\Diamond} \leq 2 \epsilon + 2 \delta$.
\end{lemma}
\begin{proof}
The channel family $(\Phi_* \Phi)^m$ is a (discrete-time) semigroup in $m \in \NN$. Via \cite[Theorem 6.7]{junge_noncommutative_2006}, since $\Phi^{t}_* \Phi^{t}$ is trace-symmetric, $\lim_{m \rightarrow \infty} (\Phi_* \Phi)^m = \E_{(t)}$ for some fixed point projection $\E_{(t)}$ that a priori might depend on $t$. Within finite dimension, this convergence can be taken in the diamond norm. The same holds for $(\Phi \Phi_*)^m$.

We observe that
\begin{equation} \label{eq:phiexchange}
    \Phi (\Phi_* \Phi)^m = (\Phi \Phi_*)^m \Phi \pl.
\end{equation}
Since quantum channels are diamond norm contractions, upper bounds on the diamond norm are stable when concatenated with some other channel applications. By the assumption that $\| \Phi - \Id\|_{\Diamond} \leq \delta$ and by the monotonicity of the diamond norm under pre- or post-application of quantum channels,
\begin{equation*}
\begin{split}
 & \| \Phi (\Phi_* \Phi)^m  - (\Phi_* \Phi)^m \|_{\Diamond} \leq \delta \text{, and }
    \mightbenewline \mightbealign
    \| (\Phi \Phi_*)^m \Phi  - (\Phi \Phi_*)^m \|_{\Diamond} \leq \delta  \pl,
\end{split}
\end{equation*}
so
\[ \| (\Phi_* \Phi)^m - (\Phi \Phi_*)^m \|_{\Diamond} \leq 2 \delta \pl. \]
Combining via the triangle inequality with the assumed convergences completes the Lemma.
\end{proof}

\begin{lemma} \label{lem:converge1}
    Let $(\Phi^{t})$ be a finite-dimensional, unital QMS. Then there exists a trace-symmetric conditional expectation $\E$ such that for any $\epsilon > 0$ and $t > 0$, there is a sufficiently large $n \in \NN$ that for every $m \geq n$,
    \begin{equation*}
    \begin{split}
    & \|  (\Phi^{t}_* \Phi^{t})^m - \E \|_{\Diamond} \leq \epsilon \pl, \pl \| (\Phi^{t} \Phi^{t}_*)^m - \E \|_{\Diamond} \leq \epsilon ,
    \mightbenewline \mightbealign
    \text{ and } \| [\Phi^{t}, (\Phi^{t}_* \Phi^{t})^m] \|_{\Diamond} \leq \epsilon \pl.
    \end{split}
    \end{equation*}
\end{lemma}
\begin{proof}
The first two inequalities simply apply Lemma \ref{lem:sameE} for channels corresponding to semigroups at fixed time.

We easily see that $\E_{(t)}$ is self-adjoint with respect to the trace, so it is a conditional expectation. Using the monotonicity of diamond norm under channel application to both arguments and that $\E_{(t)}$ is a fixed point projection, we obtain the existence of such $n$ for any desired $\epsilon$ in the first inequality. It still remains to show that $\E_{(t)}$ does not actually depend on $t$. By the data processing inequality for relative entropy,
\begin{equation*}
\begin{split}
& D(\Phi^{t}_* \Phi^{t} \E_{(t)}(\rho) \| \id / d) \leq D( \Phi^{t} \E_{(t)}(\rho) \| \id / d)
\mightbenewline \mightbealign \leq D(\E_{(t)}(\rho) \| \id/d)
 = D(\Phi^{t}_* \Phi^{t} \E_{(t)}(\rho) \| \id / d)  \pl,
\end{split}
\end{equation*}
so all of these expressions are equal. Using Lemma \ref{lem:analyt}, it holds for every $s \geq 0$ that $\Phi_*^s \Phi^s \E_{(t)}(\rho) = \E_{(t)}(\rho)$. Since $\E_{(s)} = \lim_{m \rightarrow \infty} (\Phi^s_* \Phi^s)^m$ for every $s$, $\E_{(s)} \E_{(t)} = \E_{(t)}$. Since the projections are self-adjoint, $\E_{(t)} \E_{(s)} = \E_{(t)}$ too. Switching the roles of $t$ and $s$, $\E_{(t)} = \E_{(s)}$. This equality holds for every $t, s > 0$, so all of the fixed point projections are the same map. We henceforth drop the subscript and call this projection $\E$.

To obtain the 2nd inequality, we set $t$ sufficiently small to apply Lemma \ref{lem:sameE} for sufficiently small $\delta$ and use the first inequality to obtain arbitrarily small $\epsilon$ for that Lemma's assumptions.


Recalling Equation \ref{eq:phiexchange} with $\| (\Phi^{t} \Phi^{t}_*)^m - (\Phi^{t}_* \Phi^{t})^m \|_{\Diamond} \leq 2 \epsilon$ recovers the final commutator bound for all $t \geq 0$. To go from $2 \epsilon$ to $\epsilon$, we can simply choose the original $\epsilon$ smaller.
\end{proof}
Combining previous Lemmas, the multiplicative domain projection universally acts as a fixed point projection up to rotation for unital channel families with the appropriate series expansion:
\begin{lemma} \label{lem:converge2}
Let $(\Phi^{t})_{t \geq 0}$ be a finite-dimensional, unital QMS. Then there is a unique, trace-symmetric conditional expectation $\E$ projecting to the multiplicative domain of $\Phi^{t}$ for every $t > 0$. Moreover, there exists a family of unitary rotations $(R^{t})_{t \geq 0}$ for which $\Phi^{t} \E = \E \Phi^{t} = R^{t} \E = \E R^{t}$.
\end{lemma}
\begin{proof}
 First let $\E = \lim_{m \rightarrow \infty} (\Phi^{\tau}_* \Phi^{\tau})^m$ for some fixed $\tau > 0$. Lemma \ref{lem:converge1} confirms that $\E$ is a unital conditional expectation that does not depend on the choice of $\tau$. Moreoever, Lemma \ref{lem:converge1} implies that $\Phi^t \E = \E \Phi^t$ for every $t \geq 0$. Hence by Lemma \ref{lem:rotfix}, $\E$ is the multiplicative domain projection for channel $\Phi^t$ independently of $t$ for all $t > 0$. Using that $\Phi^{t} \E = \Phi^{t} \circ \lim_{m \rightarrow \infty} (\Phi^{t}_* \Phi^{t})^m$, $\Phi^{t}_* \Phi^{t} \E = \E$.
 It was shown in \cite[Proposition 2.1]{dhahri_decoherence-free_2010} and \cite[Proposition 2]{fagnola_role_2019} that $\Phi^t \circ \E(\rho) = e^{-i \Ha t} \E(\rho) e^{i \Ha t}$, assuming that the Lindbladian is written with a Hamiltonian $\Ha$ and traceless, orthonormal jump operators as in Equation \eqref{eq:lindform}. Hence $R^t = e^{- i \Ha t}$.
 We identify $R^{-t} := R^{t}_*$. Since $\E$ is self-adjoint under the trace, $\E R^t = (R^{-t} \E)_*$. We then apply the same arguments for $(\Phi^{t}_*)$, using the same $\E$ via Lemma \ref{lem:converge1} and obtaining that $\Phi^{t}_* \E = \tilde{R}^{t} \E$ for some $\tilde{R}^{t}$. Furthermore,
\begin{equation*}
\begin{split}
& \Phi^{t}_* \Phi^{t} \E = \Phi^{t}_* \E \Phi^{t} = \tilde{R}^{t} \E \Phi^{t}
    \mightbenewline \mightbealign = \tilde{R}^{t} \Phi^{t} \E = \tilde{R}^{t} R^{t} \E = \E \pl.
\end{split}
\end{equation*}
Hence $\tilde{R}^{t} \E = R^{t}_* \E$. This observation yields the expected commutation of $R^{t}$ and $\E$:
\[ \E R^{t} = (R^{-t} \E)_* = (\Phi^{t}_* \E)_* = \E \Phi^{t} = \Phi^{t} \E = R^{t} \E \pl. \]
We thereby conclude that
\begin{equation*}
    \Phi^{t} \E = \E \Phi^{t} = R^{t} \E = \E R^{t}
\end{equation*}
for each value of $t$. Recalling Lemma \ref{lem:rotfix} with the discrete semigroup given by $m \rightarrow (\Phi^{t}_* \Phi^{t})^m$, $\E$ is the multiplicative domain projection of $\Phi^{t}$.
\end{proof}

\subsection{Adding a Hamiltonian to a Dissipator that has (C)MLSI}
We delineate the regimes in which CMLSI appears and note a somewhat counter-intuitive relationship between the decay rate induced by $\S$ and that of $\L$ for QMSs in the form of Equation \eqref{eq:ldecomp}.

We first address the case of a Lindbladian for which the dissipative generator $\S$ would on its own induce decay to a fixed point subspace with projection $\E_0$. We show that general CMLSI still holds when the Hamiltonian commutes with $\E_0$ so that $\E = \E_0$:
\begin{lemma} \label{lem:commute}
Let $\L$ be a Lindbladian with fixed point conditional expectation $\E$. Let $\tilde{\L}$ be a Lindbladian generating time-evolution that commutes with $\E$, $\Psi_1, ..., \Psi_m$ be quantum channels commuting with $\E$, $\Phi^{s}$ be the channel generated by $\L + \tilde{\L}$ from time $0$ to $s$, and
\[ \Theta := \Psi_1 \circ \Phi^{t_2 - t_1} \circ \Psi_2 \circ \Phi^{t_3 - t_2} \circ ... \Psi_{m-1} \circ \Phi^{t_{m} - t_{m-1}} \circ \Psi_m \pl. \]
If $\L$ has $\lambda$-(C)MLSI, then $\L + \tilde{\L}$ has $\lambda$-CMLSI. Moreover,
\begin{equation*}
\begin{split}
    & D(\Theta (\rho) \| \E \circ \Theta (\rho)) \leq e^{- t \lambda} D(\rho \| \E(\rho))
\end{split}
\end{equation*}
for any $t_1, ..., t_m > 0$ such that $t_1 + ... + t_m = t$.
\end{lemma}
\begin{proof}
First, assume that $\tilde{\L} = 0$. For each $j \in 1...m$, let
\[ \Theta_j := \Psi_j \Phi^{t_{j+1} - t_j} ... \Phi^{t_m - t_{m-1}} \Psi_m \pl. \]
Via the data processing inequality and commutation of $\Psi_j$ and $\Phi^{t_{j+1} - t_j}$ with $\E$,
\begin{equation*}
\begin{split}
    &D(\Theta_j (\rho) \| \E \circ \Theta_j(\rho))
	 \leq D(\Phi^{t_{j+1} - t_j} \Theta_{j+1} (\rho) \| \E(\Theta_{j+1} (\rho))) \pl.
\end{split}
\end{equation*}
Then using assumed (C)MLSI,
\begin{equation*}
\begin{split}
&D(\Phi^{t_{j+1} - t_j} \Theta_{j+1} (\rho) \| \E(\Theta_{j+1} (\rho)))
\mightbenewline \mightbealign \leq e^{-(t_{j+1} - t_j) \lambda} D(\Theta_{j+1} (\rho) \| \E(\Theta_{j+1} (\rho))) \pl.
\end{split}
\end{equation*}
Iterating the inequality from $j=1$ to $j=m$ completes this case.

We apply the Suzuki-Trotter expansion. We subdivide the time interval $[t_1, t_m]$ into smaller intervals of length $(t_m - t_1) / k$, where $k$ is chosen to be an integer multiple of $m$. Let $s_j$ label the start of each such interval for each $j \in 1...k-1$, and $s_k = s_m$. Let $\tilde{\Psi}_j$ be equal to $\Psi_{j / (k/m)}$ when $j m / k$ is an integer $\leq k$, and the identity map otherwise. Finally, let
\[ \tilde{\Theta}_j := \tilde{\Psi}_j \Phi^{s_{j+1} - s_j} ... \tilde{\Psi}_{k-1} \Phi^{s_{k} - s_{k-1}} \tilde{\Psi_k} \pl. \]
Via the data processing inequality of relative entropy and assumed commutation of $\E$ with $\tilde{\Psi}_j$
\begin{equation} \label{eq:smallintdec}
\begin{split}
& D(\tilde{\Theta}_j (\rho) \| \E (\tilde{\Theta}_j(\rho))) 
\mightbenewline \mightbealign \pl \leq D( \Phi^{s_{j+1} - s_j} \tilde{\Theta}_{j+1} (\rho) \| \E (\Phi^{s_{j+1} - s_j} \tilde{\Theta}_{j+1}(\rho)))
\end{split}
\end{equation}
Using the Kato-Suzuki-Trotter expansion,
\[ \| \Phi^{s_{j+1} - s_j} - \Phi^{s_{j+1} - s_j}_\L \Phi^{s_{j+1} - s_j}_{\tilde{\L}} \|_{\Diamond} \leq O(1/k^2) \pl, \]
where $\Phi^{s_{j+1} - s_j}_\L$ and $\Phi^{s_{j+1} - s_j}_{\tilde{\L}}$ denote respective time-evolutions generated by $\L$ and $\tilde{\L}$. In particular,
\[ \E(\Phi^{s_{j+1} - s_j}) = \Phi^{s_{j+1} - s_j}_{\tilde{\L}} \E = \E \Phi^{s_{j+1} - s_j}_{\tilde{\L}} \pl, \]
which we obtain by further subdividing the time interval $[s_j, s_{j+1}]$ into infinitesimal intervals, achieving equality in the limit. Therefore,
\begin{equation*}
\begin{split}
&  D( \Phi^{s_{j+1} - s_j} \tilde{\Theta}_{j+1} (\rho) \| \E (\Phi^{s_{j+1} - s_j} \tilde{\Theta}_{j+1}(\rho)))
\mightbenewline \mightbealign
 = D( \Phi^{s_{j+1} - s_j} \tilde{\Theta}_{j+1} (\rho) \| \Phi^{s_{j+1} - s_j}_{\tilde{\L}} \E ( \tilde{\Theta}_{j+1}(\rho))) \pl.
\end{split}
\end{equation*}
Applying the Kato-Suzuki-Trotter formula with Lemma \ref{lem:wsb}, then data processing,
\begin{equation*}
\begin{split}
 & D( \Phi^{s_{j+1} - s_j} \tilde{\Theta}_{j+1} (\rho) \| \Phi^{s_{j+1} - s_j}_{\tilde{\L}} \E ( \tilde{\Theta}_{j+1}(\rho)))
\mightbenewline \mightbealign
 \leq D( \Phi^{s_{j+1} - s_j}_\L \tilde{\Theta}_{j+1} (\rho) \|  \E ( \tilde{\Theta}_{j+1}(\rho))) + O \Big ( \frac{\log k}{k^2} \Big ) \pl,
\end{split}
\end{equation*}
Now using assumed (C)MLSI with the previous two Equations and Equation \eqref{eq:smallintdec},
\begin{equation*}
\begin{split}
& D(\tilde{\Theta}_j (\rho) \| \E (\tilde{\Theta}_j(\rho))) 
\mightbenewline \mightbealign
\leq e^{- (s_{j+1} - s_j) \lambda} D(\tilde{\Theta}_{j+1} (\rho) \| \E (\tilde{\Theta}_{j+1}(\rho))) + O \Big ( \frac{\log k}{k^2} \Big ) \pl.
\end{split}
\end{equation*}
Iterating the inequality while taking the limit as $k \rightarrow \infty$ completes the proof.

\end{proof}
To handle continuous-time dynamics, we invoke Trotter decompositions for small times:
\begin{rem} \label{rem:bimod}
When $\E$ is a conditional expectation, and $\L = i [\Ha, \cdot]$ for some Hamiltonian $\Ha$, the bimodule property of conditional expectations implies that for any input $\rho$,
\begin{equation}
\E(i[\Ha, \E(\rho)]) = i \E(\Ha \E(\rho)) - i \E(\E(\rho) \Ha) = i [\E(\Ha), \E(\rho)] \pl.
\end{equation}
\end{rem}
\begin{lemma} \label{lem:hamlnorm}
For any $p$ such that $\|\cdot\|_{p \rightarrow q}$ is a norm and $\|\cdot\|_p$ obeys H\"older's inequality,
\[ \|[\Ha, \cdot]^m \|_{p \rightarrow q, (cb)} \leq 2^m \| \Ha \|_{\infty}^m \sup_\rho \|\rho\|_q / \|\rho\|_p \]
\end{lemma}
\begin{proof}
Recall that $[\Ha, \cdot]^m(\rho)$ generates $2^m$ terms on any density $\rho$, each of which contains $m$ powers of $\Ha$ and one of $\rho$. Using H\"older's inequality and its inductive generalization,
\[ \| \Ha^k \rho \Ha^{m-k} \|_q \leq \| \Ha^k \|_{\infty} \| \rho \Ha^{m-k} \|_q
	\leq \| \Ha^k \|_{\infty} \| \| \rho \|_q \| \Ha^{m-k} \|_\infty
		\leq \| \rho \|_q \| \Ha \|_{\infty}^m \]
for any integer $k$ such that $0 \leq k \leq m$. Hence
\[ \|[\Ha, \cdot]^m\|_{p \rightarrow p} =
	\sup_\rho \|[\Ha, \cdot]^m(\rho)\|_q / \|\rho\| \leq
	2^m \| \Ha \|_{\infty}^m \sup_\rho \|\rho\|_q / \|\rho\|_p \pl. \]
To see that this holds for the cb-norm, note that $\|\Ha \otimes \id\|_\infty = \|\Ha\|_\infty$. Therefore, one may apply the same result with arbitrary extensions and find the same factors.
\end{proof}

\begin{prop}[Technical Version of Prop \ref{prop:main1intro}] \label{prop:decayhaml}
Let $\L(\rho) := i [\Ha, \rho] + \S(\rho)$ be a Lindbladian generating unital, finite-dimensional QMS $(\Phi^t)$. Then there exists a unique conditional expectation $\E$ such that $\Phi^t \E = \E \Phi^t = R^t \E = \E R^t$, where $R^t = \exp(\cdot \rightarrow - i [\Ha, \cdot])$. Moreover, assume unital $\Phi_0^t(\rho) := \exp(-t \S)(\rho)$ with $\lambda_0$-(C)MLSI and fixed point conditional expectation $\E_0$.
\begin{itemize}
    \item If evolution via $\Ha$ commutes with $\E_0$, then $\E = \E_0$, and $\Phi^t$ has $\lambda_0$-(C)MLSI.
    \item If $\Ha$ does not commute with $\E_0$, for every input $\rho$ such that $D(\E_0(\rho) \| \E(\rho)) > 0$,
    \[ D(\E_0(\rho) \| \E(\rho)) - D(\Phi^t \E_0((\rho)) \| \E(\Phi^t(\E_0(\rho))))
        \leq c_t t^2 \log C(\E) + (1 + c_t t^2) h \Big (\frac{c_t t^2}{1+c_t t^2} \Big ) = O(t^2)  \]
    for all $t \geq 0$, where
    \[ c_t := \frac{e}{2} \big ( \|\L\|_{\Diamond}^2 e^{t \|\L\|_{\Diamond}} + 4 \|\Ha\|_\infty^2 e^{2 t \|\Ha\|_{\infty}} \big ) \pl, \]
    precluding (C)MLSI or similar decay at small times.
\end{itemize}
\end{prop}
\begin{proof}
When conjugation by $\Ha$ commutes with $\E_0$, the transference of CMLSI from $\S$ to $\L = [\Ha, \cdot] + \S$ follows \cite[Proposition 1.6]{laracuente_quasi-factorization_2022}, setting $\L_0$ therein equal to $- i [\Ha, \cdot]$. Lemma \ref{lem:commute} is a generalization thereof. This proves the first point.

We now consider the second point. Whether or not conjugation by $\Ha$ commutes with $\E_0$, Lemma \ref{lem:converge2} implies that there exists a decoherence-free subspace projection $\E$ and persistent rotation $t \mapsto R^t$ such that $\Phi^t \E  = \E \Phi^t = R^t \E = \E R^t$. Observe hat
\[ \Phi_0^t(\rho) = \rho - t \S(\rho) 
    + \sum_{n=2}^\infty \frac{t^n (-\S)^n}{n!}(\rho) \pl. \]
Explicitly noted as Remark \ref{rem:expapprox} and Lemma \ref{lem:exphaml} in the appendix,
\[ \| \Phi_0^t - (\rho - t \S) \|_{\Diamond} \leq \frac{t^{2} \|\S\|_{\Diamond}^{2} \exp(t \| \S \|_{\Diamond})}{2} \pl.\]
Were $\S \circ \E_0 \neq 0$ or $\E_0 \circ \S \neq 0$, the above would contradict that $\E_0 \Phi_0^t = \Phi_0^t \E_0 = \E_0$ for extremely small $t$. Therefore, $\E_0 \circ \S = \S \circ \E_0$, and
\[ \E_0 (\Phi^t(\rho)) - \Phi^t(\E_0(\rho)) = i t ([\Ha, \E_0(\rho)] - \E_0([\Ha, \rho])) + 
    \Big [ \E_0, \sum_{n=2}^\infty \frac{t^n (-\L)^n}{n!} \Big ](\rho) \pl, \]
which is $i \times t$ times the commutator of $\cdot \rightarrow [\Ha, \cdot]$ with $\E_0$ plus subleading corrections. Since $\Phi^t \E  = \E \Phi^t$ for arbitrarily small $t$, if $\E = \E_0$, then $\cdot \rightarrow [\Ha, \cdot]$ must commute with $\E_0$. Hence by contrapositive, if $\cdot \rightarrow [\Ha, \cdot]$ does not commute with $\E_0$, then $\E \neq \E_0$.

Since $\E = \E_0 \E$, and $\S \circ \E_0 = 0$,
\[ \exp(- t \L)(\E(\rho)) = \E(\rho) - i t [\Ha, \E(\rho)] + \sum_{n=2}^\infty \frac{t^n }{n!} (-\L)^n \circ \E(\rho) \pl.\]
Again via series expansions,
\[ \| \exp(- t \L) \circ \E - (\E - (\cdot \mapsto i t [\Ha, \E(\cdot))]) \|_{\Diamond}
    \leq \frac{t^{2} \|\L\|_{\Diamond}^{2} \exp(t \| \L \|_{\Diamond})}{2} \pl. \]
Since it holds infinitesimally to linear order, $\exp(- t \L) \circ \E = R^{- i t \Ha}$, where $R^{- i t \Ha}$ is the rotation generated by conjugation with $\exp(-i t \Ha)$. For any input state $\rho$, again using Taylor expansions,
\begin{equation*}
    \Phi^t \circ \E_0(\rho) = \E_0(\rho) - i [\Ha, \E_0(\rho)] - \S \circ \E_0(\rho) + \sum_{n=2}^\infty \frac{t^n}{n} (-\L)^n \circ \E_0(\rho) \pl.
\end{equation*}
Again recalling Remark \ref{rem:expapprox} and Lemma \ref{lem:exphaml} in the appendix while noting that $\S \circ \E_0 = 0$,
\begin{equation*}
    \| \Phi^t \circ \E_0(\rho) - (\E_0(\rho) - t i [\Ha, \E_0(\rho)]) \|_1
        \leq \frac{t^2 \|\L\|_{\Diamond}^2 \exp(t \|\L\|_{\Diamond})}{2} \pl.
\end{equation*}
Also,
\begin{equation*}
    \| R^{- t i [\Ha, \cdot]} \circ \E_0(\rho) - (\E_0(\rho) - t i [\Ha, \E_0(\rho)]) \|_1
        \leq \frac{4 t^2 \|\Ha\|_\infty^2 \exp(2 t \|\Ha\|_{\infty})}{2} \pl.
\end{equation*}
Therefore,
\begin{equation} \label{eq:t-bound-prop}
\begin{split}
    \| \Phi^t \circ \E_0(\rho) - R^{- i t \Ha} \circ \E_0(\rho) \|_1
        \leq \frac{e}{2} t^2 \big ( \|\L\|_{\Diamond}^2 e^{t \|\L\|_{\Diamond}} + 4 \|\Ha\|_\infty^2 e^{2 t \|\Ha\|_{\infty}} \big ) \pl.
\end{split}
\end{equation}
Using Lemma \ref{lem:wsb},
\begin{equation} \label{eq:entropy-diff-prop}
\begin{split}
    & |D(R^{- i t \Ha} \E_0(\rho) \| \E(R^{- i t \Ha} \E_0(\rho))) - D(\Phi^t \E_0((\rho)) \| \E(\Phi^t(\E_0(\rho))))|
    \\ & \leq c_t t^2 \log C(\E) + (1 + c_t t^2) h \big (\frac{c_t t^2}{1+c_t t^2} \big ) ,
\end{split}
\end{equation}
where $\epsilon = \frac{1}{2} \| \Phi^t \E_0 (\rho) - R^{- i t \Ha} \E_0 (\rho) \|$. Since $\E$ is the decoherence-free subspace projection of $\L$, $D(R^{- i t \Ha} \E_0(\rho) \| \E(R_{- i t \Ha} \E_0(\rho))) = D(\E_0(\rho) \| \E(\rho))$. 
In contrast, CMLSI would predict that
\[ 
D(\E_0(\rho) \| \E(\rho)) - D(\Phi^t \E_0((\rho)) \| \E(\Phi^t(\E_0(\rho))))
        \geq (1 - e^{-\lambda t}) D(\E_0(\rho) \| \E(\rho)) \pl.
\]
To simplify the CMLSI bound, recall that $e^{-\lambda t} \geq 1 - \lambda t$, so
\[ 
    D(\E_0(\rho) \| \E(\rho)) - D(\Phi^t \E_0((\rho)) \| \E(\Phi^t(\E_0(\rho))))
        \geq \lambda t D(\E_0(\rho) \| \E(\rho)) \pl.
\]
For sufficiently small $t$, the linear reduction predicted by CMLSI is incompatible with quadratic order decay. Via the definition of the Pimsner-Popa index and Equation \eqref{eq:pimsnerpopa}, there exists a $\rho$ achieving $D(\E_0(\rho) \| \E(\rho)) = \log C(\E_0 \| \E) > 0$ when $\E_0 \neq \E$.
\end{proof}
Though Lemma \ref{lem:commute} shows that the Hamiltonian term in Equation \eqref{eq:ldecomp} does not directly block decay via the dissipator, a number of previous results suggest that the dissipator may alter the effective Hamiltonian. These results include the generalized Zeno or watchdog effect \cite{misra_zenos_1977, barankai_generalized_2018, mobus_quantum_2019, burgarth_quantum_2020, becker_quantum_2021, mobus_optimal_2023}, adiabatic theorems \cite{kato_adiabatic_1950, burgarth_generalized_2019}, and strong damping limits \cite{popkov_effective_2018, zanardi_coherent_2014, burgarth_generalized_2019, burgarth_one_2022} in which a fast dissipator modifies the effective dynamics of a simultaneous, slower process.  We derive a variation on the strong damping limit from cp-order assumptions:
\begin{cor} (Strong Damping) \label{cor:ctsfinal}
Let $\S$ be a bounded Lindbladian in submultiplicative norm $\|\cdot\|$ with contractive generated semigroup and fixed point projection $\E_0$ such that $\exp(- \S \tau ) \geq_{cp} (1 - \epsilon) \E_0$ for some $\tau \in \RR^+$. Let $\L$ be a bounded Lindbladian such that $e^{-\L t}$ is also contractive for all $t \in \RR^+$. Then for every $t \in \RR^+$,
\begin{equation*}
\begin{split}
 & \big \| e^{- (\S + \L) t} \E_0 - e^{- \E_0 \L \E_0 t} \E_0 \big \| 
    \leq \frac{7 e^2 \tau \|\L \|^2 t }{1-\epsilon} \pl .
\end{split}
\end{equation*}
\end{cor}
\begin{proof}
Via the assumption that $\exp(- \S \tau ) \geq_{cp} (1 - \epsilon) \E_0$, $\exp( - \S t / k)^g \geq_{cp} (1 - \epsilon) \E_0$ as long as $g t / k \geq \tau$ for any $k \in \NN$. Via Theorem \ref{thm:zeno} applied to $(\exp(-\S / k) \exp(- \L / k))^k$,
for any $k \in \NN$, $q \in 1...k$, and $k_1, ..., k_q$ for which $k_1 + ... + k_q = k$,
\begin{equation*}
\begin{split}
 & \big \| \big (e^{- \S t / k} \circ e^{- \L t / k} \big )^k \E_0 - e^{- \E_0 \L \E_0 t} \E_0 \big \|
    \leq \\ & \sum_{m=1}^q \bigg ( 
        2 k_m \epow{\| \L \| t / k}{2} +  \frac{\lceil k \tau / t \rceil}{k_m} \Big ( 1 + \frac{1}{k_m} \Big ) \frac{(\|\L \| t k_m / k)^3 + 4 (\|\L \| t k_m / k )^2 + 2 \|\L \| t k_m / k }{1-\epsilon} e^{\|\L \| t k_m / k} \bigg ) .
\end{split}
\end{equation*}
Set $q = \lceil \|\L\| t \rceil$ and assume that $k$ is a multiple of $q$. Then set $k_1 = ... = k_q = k/q$. Note that $t \|\L\| / \lceil t \|\L\|\rceil = t \|\L\| / q  \leq 1$. Hence
\begin{equation*}
\begin{split}
 & \Big \| \prod_{m=1}^k \big (e^{- \S t / k} \circ e^{- \L t / k} \big ) \E_0 - \prod_{m=1}^k e^{- \E_0 \L \E_0 t / k} \E \Big \| 
    \leq \frac{7 e \lceil \|\L\| t \rceil \lceil k \tau / t \rceil}{k} \Big ( 1 + \frac{q}{k} \Big ) \frac{\|\L \| t }{1-\epsilon} \pl .
\end{split}
\end{equation*}
The Suzuki-Trotter formula \cite{suzuki_generalized_1976} yields that $\exp(-t (\S + \L)) = \lim_{k \rightarrow \infty} (e^{- \S t / k} \circ e^{- \L t / k} )^k$. In this limit, $\lceil k \tau / t \rceil \rightarrow k \tau / r$, and $q / k \rightarrow 0$. This limit yields that
\[ \big \| e^{- (\S + \L) t} \E_0 - e^{- \E_0 \L \E_0 t} \E_0 \big \| 
    \leq \min \frac{7 e \tau \|\L \| \lceil \|\L\|t  \rceil}{1-\epsilon} \pl.
\]
Alternatively, setting $q = 1$ yields that
\[
\big \| e^{- (\S + \L) t} \E_0 - e^{- \E_0 \L \E_0 t} \E_0 \big \| 
    \leq  \frac{7 e \tau \|\L \| ^2 t }{1-\epsilon} e^{\| \L\| t}  \pl.
\]
If $\|\L\| t \leq 1$, then $\exp(\| \L\| t) \leq e$. If $\|\L\| t \geq 1$, then $\lceil \| \L \| t \rceil \leq \| \L \| t + 1 \leq e \| \L \| t$. Therefore, an extra factor of $e$ allows us to combine the two bounds, avoiding both the ceiling function and the exponential growth.
\end{proof}
Unlike some previous formulations \cite{zanardi_coherent_2014, popkov_effective_2018, burgarth_generalized_2019}, Corollary \ref{cor:ctsfinal} depends on decay properties of the damping Lindbladian rather than on an explicit, multiplying constant. This property will be essential for obtaining bounds from CMLSI and other norm or entropy decay. In the literature on strong damping, strong coupling, and generalized Zeno effects, it is broadly expected that the asumptotic order should be $O(1/\gamma)$ for continuous damping with strength paramter $\gamma$ \cite{burgarth_generalized_2019}, or $O(1/k)$ in discrete interruptions when $k$ is the interruption number within a fixed time interval \cite{mobus_optimal_2023}. Corollary \ref{cor:ctsfinal} and Theorem \ref{thm:zeno} present relatively simple formulations with the optimal $1/\gamma$ asymptotic order expected in these bounds, and a polynomial dependence on $\|\L\|$, which translates to a polynomial time-dependence when including an explicit time parameter.

The QMS generated by $\E_0 \L \E_0$ is conventionally referred to as Zeno dynamics. A related, counter-intuitive observation is that a chain of projections may approach the action of a unitary on a particular subspace. Let $\E_t := R_{\exp(-i \Ha t)} \circ \E_0 \circ R_{\exp(i \Ha t)}$ for Hamiltonian $\Ha$ and any $t \in \RR$. Note that for any $k \in \NN$, $\E_{t} \E_{t-1/k} ... \E_{1/k} = R_{\exp(-i \Ha t)} \circ (\E_0 R_{\exp(i \Ha t/k)})^k \pl$. As a direct consequence,
\begin{equation} \label{eq:condexpunitary}
\lim_{k \rightarrow \infty} \E_{t} \E_{t-1/k} ... \E_{1/k} \E_0 = R_{\exp(i (\E_0(\Ha) - \Ha) t)} \circ \E_0
\end{equation}
Though each $\E_t, ..., \E_{1/k}$ is a projection that we might interpret as rotation-free, in the continuum limit, the chain of composed projections approaches unitary rotation following $\E_0$. In the limit, such a chain of conditional expectations does not induce decay of a state toward an intersection of fixed point subspaces but rotates the subspace projected to by $\E_0$.
\begin{rem} \label{rem:clsinorm} \normalfont
If a $d$-dimensional semigroup $\Phi^t$ generated by $\S$ induces decay toward decoherence-free subspace projection $\E$ as
\begin{equation} \label{eq:decayassump}
    D(\Phi^t(\rho) \| R^t \E(\rho)) \leq e^{-\lambda t} D(\rho \| \E(\rho)) \pl,
\end{equation}
for $t$ in some range, then via Pinsker's inequality as in Equation \eqref{eq:pinsker}, for any $\E_0$ such that $\E \E_0 = \E_0 \E = \E$,
\[ \| \Phi^t \circ \E_0 - R^t \E\|_{\Diamond} \leq e^{- \lambda t / 2} \sqrt{2 \ln C_{cb}(\E_0 \| \E)} \pl. \]
where $C_{cb}(\E_0 \| \E)$ is the Pimsner-Popa index as in Equation \eqref{eq:pimsnerpopa}. Recall that the diamond norm upper bounds the infinity norm difference of Choi matrices. In dimension $d$, there exists a constant $c \leq d^2$ for which if any channel $\Psi$ has that $\E \Psi = \Psi \E = \E$, and $\| \Psi - \E\|_{\Diamond} \leq \epsilon$, then $\Psi \geq_{cp} (1 - c \epsilon) \E$ \cite[Proposition 11.16]{laracuente_quasi-factorization_2022}. Therefore, if $\Phi^t$ induces decay as in Equation \eqref{eq:decayassump}, and $R^t = \Id$ for all $t \geq 0$, then
\[ \Phi^t \geq_{cp} \big (1 - e^{- \lambda t / 2} c \sqrt{2 \ln C_{cb}(\E_0 \| \E)} \big ) \E \pl. \]
In particular, to achieve $\Phi^t \geq_{cp} \E / 2$, one may set
\begin{equation} \label{eq:defaultt}
    t = \frac{2}{\lambda} \ln (2 c \sqrt{2 \ln C_{cb}(\E_0 \| \E)})
\end{equation}
assuming the exponential decay inequality holds at this value.
\end{rem}
\begin{prop} \label{prop:toptimizezeno}
    Let $\L(\cdot) = \L' + \S(\cdot)$ generate QMS $(\Phi^t)$, where $\S$ a Lindbladian generating semigroup $(\Phi_0^t)$ with $\lambda_0$-CMLSI and fixed point conditional expectation $\E_0$. Let $\| \cdot\|$ be a submultiplicative superoperator norm. Let $c \leq d^2$ be the smallest constant for which if any channel $\Psi$ has that $\E \Psi = \Psi \E = \E$, and $\| \Psi - \E\| \leq \epsilon$, then $\Psi \geq_{cp} (1 - c \epsilon) \E$. Then
    \[ \frac{1}{2} \| e^{- t \E_0 \circ \L' \circ \E_0} \E_0 - \Phi^t \E_0 \| 
        \leq \frac{t}{\lambda_0} \times 28 e^2 \|\L'\|^2 \ln \Big (2 c \sqrt{2 \ln C_{cb}(\E_0 \| \E)} \Big)
        \pl. \]
\end{prop}
\begin{proof}
         Let
    \[ \delta_t := \frac{1}{2} \| e^{\L' t} \E_0 - \Phi^t \E_0 \| \pl. \]
    Recalling Corollary \ref{cor:ctsfinal},
    \[ \delta_t \leq \frac{7 e^2 \tau \|\L'\|^2 t}{1-\epsilon} \]
    for any $\epsilon, \tau > 0$ for which $\exp(- \S \tau) \geq_{cp} (1-\epsilon) \E_0$. Setting $\tau = (2/\lambda_0) \ln (2 c \sqrt{2 \ln C_{cb}(\E_0 \| \E)})$ and $\epsilon = 1/2$ as in Equation \eqref{eq:defaultt} yields that 
    \begin{equation*} \label{eq:deltabound}
        \delta_t \leq t \times 28 e^2 \|\L\|^2 \ln \big (2 c \sqrt{2 \ln C_{cb}(\E_0 \| \E)} \big) / \lambda_0 \pl.
    \end{equation*}
    Observe that $\delta_t$ is linear in $t / \lambda_0$.
\end{proof}
\begin{prop} \label{prop:normconverse}
    Let $\L(\cdot) = i [\Ha, \cdot] + \S(\cdot)$ generate QMS $(\Phi^t)$ with decoherence-free subspace projection $\E$, where $\S$ a Lindbladian generating semigroup $(\Phi_0^t)$ with $\lambda_0$-CMLSI and fixed point conditional expectation $\E_0$. Assume that $\E \neq \E_0$, that $\E \E_0 = \E_0 \E = \E$.
    Let $c \leq d^2$ be the smallest constant for which if any channel $\Psi$ has that $\E \Psi = \Psi \E = \E$, and $\| \Psi - \E\|_{\Diamond} \leq \epsilon$, then $\Psi \geq_{cp} (1 - c \epsilon) \E$. Then for every input density $\rho$,
    \[ \frac{1}{2} \| (\Phi^t \circ \E_0 - \Phi^t \circ \E)(\rho) \|_1 \geq c_0(\rho) -  \frac{t}{\lambda_0} \times 224 e^2 \| \Ha \|_{\infty}^2 \ln  \big ( 2 c \sqrt{2 \ln C_{cb}(\E_0)} \big ) \pl, \]
    where
    \[ c_0(\rho) : = \frac{1}{(\log_2 C(\E))^2} \Big ( \sqrt{1 + \log_2 D( \E_0(\rho) \| \E(\rho)) \log_2 C(\E)} - 1 \Big )^2 \pl.\]
\end{prop}
\begin{proof}
    That $\{\E_0(\omega) | D(\E_0(\omega) \| \E(\omega)) = \log C(\E_0 \| \E) \}$ is non-empty follows from the definition of the Pimsner-Popa index as in Equation \eqref{eq:pimsnerpopa} as the maximum value of $D(\E_0(\cdot) \| \E(\cdot))$.
    
    Via the triangle inequality,
    \begin{equation*}
    \frac{1}{2} \| (\Phi^t \E_0 - \Phi^t \E)(\rho) \|_1 \geq \frac{1}{2} \big \| (\Phi^t \E -  e^{- i t \E_0 \circ [\Ha, \cdot] \circ \E_0} \E_0)(\rho) \big \|_1
    		- \frac{1}{2} \big \| (\Phi^t \E_0 - e^{- i t \E_0 \circ [\Ha, \cdot] \circ \E_0 } \E_0 )(\rho) \big \|_1 \pl.
    \end{equation*}
    Again by the triangle inequality and invariance of trace distance under rotation,
    \begin{equation*}
    \begin{split}
        & \frac{1}{2} \big \| (\Phi^t \E -  e^{- i t \E_0 \circ [\Ha, \cdot] \circ \E_0} \E_0)(\rho) \big \|_1
     \\ & \geq   \frac{1}{2} \big \| ( e^{- i t \E_0 \circ [\Ha, \cdot] \circ \E_0} \E -  e^{- i t \E_0 \circ [\Ha, \cdot] \circ \E_0} \E_0)   (\rho) \big \|_1
        -  \frac{1}{2} \big \| (\Phi^t \E -  e^{- i t \E_0 \circ [\Ha, \cdot] \circ \E_0} \E)(\rho) \big \|_1 \pl.
     \\ & =   \frac{1}{2} \big \| ( \E - \E_0)   (\rho) \big \|_1
        -  \frac{1}{2} \big \| (\Phi^t \E -  e^{- i t \E_0 \circ [\Ha, \cdot] \circ \E_0} \E)(\rho) \big \|_1 \pl.
    \end{split}
    \end{equation*}
    Let $c_0 :=\frac{1}{2} \big \| (\E - \E_0) (\rho) \big \|_1$.
    By Lemma \ref{lem:wsb},
    \begin{equation}
        |D(\E(\rho) \| \E('')) - D(\E_0(\rho) \| \E(''))|
            \leq c_0 \log_2 C(\E) + 2 \sqrt{c_0} \pl.
    \end{equation}
    where each $''$ as the second argument to relative entropy repeats the first argument inside the parenthesis. Since $\Phi_0^t$ commutes with $\E$, $D(\Phi_0^t \E(\rho) \| \E('')) = 0$. Therefore,
    \begin{equation}
        D( \E_0(\rho) \| \E(''))
            \leq c_0 \log_2 C(\E) + 2 \sqrt{c_0} \pl.
    \end{equation}
    Via the quadratic formula,
    \begin{equation}
    \begin{split}
        \sqrt{c_0} \geq & \frac{- 2 + \sqrt{4 + 4 \log_2 D( \E_0(\rho) \| \E(\rho)) \log_2 C(\E)} }{2 \log_2 C(\E)}
        \\ & = \frac{1}{\log_2 C(\E)} \Big ( \sqrt{1 + \log_2 D( \E_0(\rho) \| \E(\rho)) \log_2 C(\E)} - 1 \Big ) \pl.
    \end{split}
    \end{equation}
    Recall that
    \begin{equation*}
    \begin{split}
        & \frac{1}{2} \big \| (\Phi^t \E -  e^{- i t \E_0 \circ [\Ha, \cdot] \circ \E_0} \E_0)(\rho) \big \|_1
            \geq c_0
            \\ & - \frac{1}{2} \big \| (\Phi^t \E_0 - e^{- i t \E_0 \circ [\Ha, \cdot] \circ \E_0 } \E_0 )(\rho) \big \|_1
                - \frac{1}{2} \big \| (\Phi^t \E -  e^{- i t \E_0 \circ [\Ha, \cdot] \circ \E_0} \E)(\rho) \big \|_1 \pl.
    \end{split}
    \end{equation*}
    Each of the subtracted terms is bounded by Proposition \ref{prop:toptimizezeno}. Finally, we convert the factor of $28$ multiplying each $\|\L'\|^2$ therein to four times that multiplying $\|\Ha\|_\infty^2$ via Lemma \ref{lem:hamlnorm}. 
\end{proof}
\begin{theorem}[Self-Restricting Noise, Technical Version of Theorem \ref{thm:zvscmlsiintro}] \label{thm:zvscmlsi}
     Let $\L(\cdot) = i [\Ha, \cdot] + \S(\cdot)$ generate QMS $(\Phi^t)$ in dimension $d$, where $\S$ a Lindbladian generating semigroup $(\Phi_0^t)$ with $\lambda_0$-CMLSI and fixed point conditional expectation $\E_0$. Let $\E$ be the decoherence-free subspace projection of $(\Phi^t)$. Assume that $\E \neq \E_0$, and $\E \E_0 = \E_0 \E = \E$. Let $c \leq d^2$ be the smallest constant for which if any channel $\Psi$ has that $\E \Psi = \Psi \E = \E$, and $\| \Psi - \E\|_{\Diamond} \leq \epsilon$, then $\Psi \geq_{cp} (1 - c \epsilon) \E$. If relative entropy is defined with the natural logarithm, for any $a \in (0,1)$ and input state $\rho$,
     \[ D(\Phi^t (\E_0(\rho)) \| \Phi^t \circ \E(\rho)) \geq e^{- \lambda t} D(\E_0(\rho) \| \E(\rho)) \]
     at $t = a \lambda_0 c_0 / \delta \| \Ha \|_\infty^2$, where 
     \begin{equation*}
     \begin{split}
         \lambda \leq \frac{\delta \| \Ha \|_\infty^2}{\lambda_0 c_0 a} \times \ln \Big ( \frac{D(\E_0(\rho) \| \E(\rho))}{2 (1-a)^2 c_0^2}  \Big ) \pl,
     \end{split}
     \end{equation*}
     \begin{equation*}
         c_0 = \frac{1}{(\log_2 C(\E))^2} \Big ( \sqrt{1 + \log_2 D(\E_0(\rho) \| \E(\rho)) \log_2 C(\E)} - 1 \Big )^2 \pl, \text{ and }
         \delta = 224 e^2 \ln  \big ( 2 c \sqrt{2 \ln C_{cb}(\E_0)} \big ) \pl.
     \end{equation*}
     If $\rho \in \{\E_0(\omega) | D(\E_0(\omega) \| \E(\omega)) = \log C(\E_0 \| \E) \}$, then one may replace $D(\E_0(\rho) \| \E(\rho))$ by $C(\E_0 \| \E)$ for state-independent constants.
\end{theorem}
In Theorem \ref{thm:zvscmlsiintro}, we set $c_1 = \log C(\E_0 \| \E)$ and other constants as noted.
\begin{proof}
    Let $D_0 := D(\E_0(\rho) \| \E(\rho))$. Assuming that
     \[ D(\Phi^t (\E_0(\rho)) \| \Phi^t \circ \E(\rho)) = e^{- \lambda t} D(\E_0(\rho) \| \E(\rho)) \pl, \]
     we must have that
    \[ \lambda = \frac{1}{t} \ln \frac{ D_0}{D(\Phi^t \circ \E_0(\rho) \| \Phi^t \circ \E(\rho))} \pl. \]
    Via Pinsker's inequality as in Equation \eqref{eq:pinsker},
    \[ D(\Phi^t \circ \E_0(\rho) \| \Phi^t \circ \E(\rho)) \geq \frac{1}{2} \| \Phi^t \circ \E_0(\rho) - \Phi^t \circ \E (\rho) \|_1^2 \pl. \]
    Let $\L'(\cdot) := i [\Ha, \cdot]$. Assuming that $D(\E_0(\rho) \| \E(\rho))$ is maximized by choice of $\rho$ and invoking Proposition \ref{prop:normconverse},
    \[ D(\Phi^t \circ \E_0(\rho) \| \Phi^t \circ \E(\rho)) \geq 2 \Big ( c_0
    		-  \tilde{\delta} \frac{t}{\lambda_0} \Big )^2 \pl, \]
    where $\tilde{\delta} = \| \Ha \|_{\infty}^2 \delta$. Therefore,
    \begin{equation*}
        \lambda \leq - \frac{1}{t} \ln \Big ( \frac{2}{D_0} \Big ( c_0
    		-  \tilde{\delta} \frac{t}{\lambda_0} \Big )^2  \Big ) 
    \end{equation*}
    whenever $t$ is large enough that the argument of the natural logarithm is positive. For a convenient bound, set
    \begin{equation*}
        t = c_0 \times \frac{a \lambda_0}{\tilde{\delta}} \pl,
    \end{equation*}
    obtaining that
    \begin{equation*}
        \lambda \leq \frac{\tilde{\delta}}{\lambda_0 a} \times \frac{1}{c_0} \times \ln \Big ( \frac{D_0}{2 (1-a)^2} \Big ( \frac{1}{c_0} \Big )^2  \Big ) \pl.
    \end{equation*}
    The primary bound of the Theorem follows.
\end{proof}
When $\E \neq \E_0$, Theorem \ref{thm:zvscmlsi} upper bounds the decay rate of $\L$, $\lambda$, \textit{inversely} to that of $\S, \lambda_0$. This phenomenon is what we call \textit{self-restricting noise}. The effect is clearest for unital noise, as because $\Ha$ generates unitary time-evolution, it cannot reverse the effects of mixing. Were $\S$ to act alone, it would induce decay at rate $\lambda_0$ but possibly in a smaller portion of the system, as $\E_0$ leaves a larger subspace decoherence-free than does $\E$. When $\S$ acts simultaneously with $\Ha$, the Hamiltonian spreads the noise around, inducing more decay. When $\S$ has a large (C)MLSI constant, however, the strong damping effect of Corollary \ref{cor:ctsfinal} effectively reduces the magnitude of the noise-spreading components of $\Ha$, shifting back toward the regime in which $\S$ acts alone, but not enough to prevent $\E$ from still being the ultimate fixed-point projection. The Hamiltonian does not slow down the dissipator-induced noise - the noise impedes its own means of spread. Nonetheless, the system has CMLSI neither toward the subspace projected to by $\E_0$, since the wider spread of mixing could actually increase the relative entropy with respect to $\E_0$, nor to $\E$, since decay takes time to spread from its initial subspace to the wider system.

Corollary \ref{cor:ctsfinal} differs from results of \cite{burgarth_quantum_2020} by bounding convergence in terms of cp-order, which leads to a bound in terms of CMLSI. This distinction may seem subtle, but it is essential for Theorem \ref{thm:zvscmlsi}, which compares decay constants of a Lindbladian and its rotation-free constituent. Furthermore, the asymptotic dependence on $t$ and $\L$ is explicitly shown and of polynomial order. Remark \ref{rem:alternate} exposits an alternate approach to the self-restricting noise bound with a less optimal asymptotic depenendence. As seen in Example \ref{ex:graphs}, there are cases in which stronger decay does not arise from multiplicative comparability. A related phenomenon is that fast dissipation may protect systems from Hamiltonian interactions with other environment systems \cite{erez_correcting_2004}, acting as a sort of error suppression.
\begin{rem} \normalfont \label{rem:dd}
Theorem \ref{thm:zvscmlsi} is \textit{not} a consequence of dynamical decoupling, strong coupling, or feed-forward error correction. Furthermore, analogous effects are unlikely to arise from emergent, passive error correction in the primary settings considered.

One may consider the unitary in Equation \eqref{eq:stinespring} to be generated by a combination of internal dynamics, $\Ha_{\text{sys}}$, environmental dynamics, $\Ha_{\text{env}}$ and system-environment coupling, $\Ha_{\text{sys-env}}$. It was previously observed \cite{burgarth_generalized_2019, burgarth_one_2022} that within finite dimension, and if $\Ha_{\text{env}}$ is not too fast, then
\[ \lim_{\gamma \rightarrow \infty} e^{i (\gamma \Ha_{\text{sys}} + \Ha_{\text{env}} + \Ha_{\text{sys-env}}) t}
    = e^{i \gamma \Ha_{\text{sys}} t } e^{i (\Ha_{\text{env}} + \E(\Ha_{\text{sys-env}}))t} . \]
In this case, $\E$ is a projection to the subalgebra of operators that commute with $\Ha_{\text{sys}}$. If $\Ha_{\text{sys}}$ is replaced by a series of discrete interruptions, one obtains a form of dynamical decoupling \cite{viola_dynamical_1998, viola_dynamical_1999}. For a more comprehensive review of dynamical decoupling and related techniques, see \cite{lidar_review_2014, suter_colloquium_2016}.

In the Markov regime, however, adding a Hamiltonian generally does not induce dynamical decoupling or strong coupling against the dissipative term \cite{gough_can_2017}. Primarily, the dissipative assumption requires that $\Ha_{\text{env}}$ acts much more quickly than $\Ha_{\text{sys-env}}$, effectively de-localizing correlations between the environment and the system's past extremely quickly. This assumption yields the divisibility of entropy decay. To illustrate, we recall the depolarizing semigroup $(\Phi^t_{\mathrm{dep}(1)})$ given in Equation \eqref{eq:depsemi}.
At any $t \geq 0$, this channel commutes with application of any local unitary. Hence if one applies a series of dynamical decoupling unitaries $R_0, ..., R_{k}$, one finds that
\[ R_0 \circ \prod_{j=1}^k (R_j \circ \Phi^{t/k}) = R_0 \circ ... \circ R_k \circ \Phi_{\mathrm{dep}(1)}^{t} \pl. \]
Similarly, for the Lindbladian $\S_{dep}$ generating depolarizing noise,
\[ e^{- (i [\Ha_{\text{sys}}, \cdot] + \S_{dep}) t} = e^{- i [\Ha{\text{sys}}, \cdot] t} e^{- \S_{dep} t} \pl. \]
No unitary may undo the entropy-generating effects of $\Phi_{\mathrm{dep}(1)}^t$. Hence unitary interruption does not forestall the decay predicted by (C)MLSI, as generalized and formalized in Lemma \ref{lem:commute}. To apply dynamical decoupling to this and related forms of noise, one must use fast pulses that bypass the dissipative approximation, exposing the underlying system-environment unitary coupling. The pulse rate needed depends not just on $\Ha_{\text{sys-env}}$, but on $\Ha_{\text{env}}$.

In contrast, Theorem \ref{thm:zvscmlsi} concerns the regime in which $\Ha_{\text{sys-env}}$ and $\Ha_{\text{env}}$ generate faster dynamics than $\Ha_{\text{sys}}$, formalized by the effective dissipator $\S$ having $\lambda_0$-(C)MLSI such that $\lambda_0$ is large compared to the maximum eigenvalue of $\Ha_{\text{sys}}$. This is the opposite regime from that of the strong coupling limit or that in which dynamical decoupling is expected to perform well. The regime of Theorem \ref{thm:zvscmlsi} is related to strong damping, in which a dissipative processes suppresses other processes. However, it appears more surprising that the limit of fast dissipation (rather than of strong coupling or dynamical decoupling) may nonetheless suppress noise. Some precedent for self-restriction is implicit in earlier works on strongly dissipative regimes \cite{zanardi_coherent_2014, popkov_effective_2018}, although there the strength of dissipation is tuned by an explicit constant multiplying the Lindbladian term rather than a (C)MLSI constant as herein.

Since the examples we will consider do not include feed-forward or explicitly describe measurements, there is no active error correction. There is some connection to error reduction via Zeno-based dissipative error reduction \cite{beige_quantum_2000}. In the scenario of this paper, however, noise restriction via Zeno-like effects is self-mediated, rather than constructed through additional interactions. It is also possible for dissipative interactions to induce passive error correction \cite{liu_dissipative_2024}. As we primarily focus on unital Lindbladians, the emergence of passive error correction is however unexpected in this regime.
\end{rem}
\begin{rem} \normalfont
An immediate consequence of Proposition \ref{prop:decayhaml} is that (C)MLSI fails for asymptotically small time. As studied in the following Subsection \ref{sec:decayyes}, one can often obtain (C)MLSI-like decay at finite timescales, suggesting that (C)MLSI might still be a reasonable effective model at physically relevant scales. Theorem \ref{thm:zvscmlsi}, however, upper bounds decay rates if either  $\|\Ha\|_\infty$ is small or the decay induced by $\S$ alone large. Recalling Remark \ref{rem:dd}, here we see that noise becomes constrained when the dissipative interactions overpower the system's internal Hamiltonian, a regime \textit{opposite} to that of usual dynamical decoupling or strong coupling.
\end{rem}

\subsection{Repeatable CSDPI for Unital Markov Processes at Finite Timescales} \label{sec:decayyes}
The breakdown of (C)MLSI in Proposition \ref{prop:decayhaml} for unital QMSs when $\E \neq \E_0$ arises from the fact that at small timescales, the dissipator is still effectively decaying the system toward the subspace of $\E_0$, but at long timescales toward that of $\E$. The underlying component that fails is divisibility: the exponential decay rate in relative entropy toward $\E$ at time $t > 0$ is not the integral of decay rates from times 0 to $t$, violating Equation \eqref{eq:divisdecay}. As noted in Remark \ref{rem:dd}, the action of a Hamiltonian cannot reverse relative entropy decay toward a unital fixed point subspace projection. Hence a natural question is whether an indivisible generalization of CMLSI holds more generally for unital QMSs. In this Section, we answer this question in the affirmative.

Previously, \cite[Theorem 2.6]{gao_complete_2025} showed that for any unital quantum channel $\Phi$, there is a CSDPI constant $\lambda_\Phi \in [0,1)$ such that
\[ D(\Phi(\rho) \| \Phi \circ \E_{\Phi}(\rho)) \leq \lambda_\Phi D(\rho \| \E_{\Phi}(\rho)) \]
and gave a procedure to calculate $\lambda_\Phi$ in terms of the behavior of $(\Phi \Phi_*)^k$ for sufficiently large $k \in \NN$. An immediate question is whether $\Phi^k$ immediately has (C)SDPI constant $\lambda_\Phi^k$ by iteration. If so, such a result would immediately lead to a discrete-time (C)MLSI-like bound. As shown in Counterexample \ref{cexam:csdpi}, however, such iteration is not always possible. In general, $[\Phi, \E_{\Phi}] \neq 0$, so the same (C)SPDI result no longer applies after the first step. More strikingly, we also see that sometimes there is no initial decay to the large-$k$ fixed point of $\Phi^k$, since the eventual fixed point subspace projection may not coincide with the multiplicative domain projection. Other results of \cite{gao_complete_2025} avoid this barrier by assuming GNS detailed balance.

The intuition for our adjustment is to show that the semigroup property suffices to ensure that $\Phi^{t} \E = \E \Phi^{t} = R^{t} \E = \E R^{t}$ for some unitary conjugation family $R^{t}$, and for a projection $\E$ that does not depend on the choice of $t \geq 0$. A natural intuition would be that as $\Phi^{t}$ is analytic for $t \geq 0$, certain qualitative properties must be preserved for all $t \geq 0$. If this analyticity passes to the induced decay, then it should be impossible for a QMS that induces late-time decay not to induce decay in any interval $(0,\tau)$ for arbitrarily small $\tau > 0$.
\begin{lemma} \label{lem:semigroupcsdpi}
Let $(\Phi^{t})_{t \geq 0}$ be a unital, finite-dimensional QMS. Then there is a decoherence-free subspace projection $\E$, and for each $\tau \geq 0$ there is a minimal $k_{\tau} \in \NN$ such that for all $k \geq k_{\tau}$,
\begin{equation} \label{eq:kcond}
(1 + 1/10) \E \geq_{cp} (\Phi^{\tau}_* \Phi^{\tau})^{k} \geq_{cp} (1 - 1/10) \E \pl .
\end{equation}
Furthermore there exist unitary rotations $(R^t)$ for which $\Phi^{t} \E = \E \Phi^{t} = R^{t} \E = \E R^{t}$ for all $t \geq 0$, and
\begin{equation*}
\begin{split}
& D((\Phi^{\tau} \otimes \Id^B)(\rho) \| ((\E \circ \Phi^{\tau}) \otimes \Id^B) (\rho))
    \mightbenewline \mightbealign
    \leq (1-1/2 k_{\tau}) D(\rho \| (\E \otimes \Id^B)(\rho))
\end{split}
\end{equation*}
for any finite-dimensional auxiliary system $B$ and input density $\rho$.
\end{lemma}
\begin{proof}
The first part of the Lemma is the assertion that for some $k \in \NN$, $(\Phi^{t}_* \Phi^{t})^k$ becomes cp-order comparable to a projection $\E$ that is uniform in $t$ and projects to the multiplicative domain. By Lemma \ref{lem:converge1}, $(\Phi^{t}_* \Phi^{t})^k$ is arbitrarily close to $\E$ in diamond norm for sufficiently large $k$. By \cite[Proposition II.16]{laracuente_quasi-factorization_2022}, diamond norm convergence indeed implies that $(\Phi^{t}_* \Phi^{t})^k \geq_{cp} (1-\zeta) \E$ for arbitrarily small $\zeta$ with sufficiently large $k$. Let us thereby write $(\Phi^{t}_* \Phi^{t})^k = (1-\zeta) \E + \zeta \Psi$ for some channel $\Psi$. Since $(\Phi^{t}_* \Phi^{t})^k \E = \E (\Phi^{t}_* \Phi^{t})^k = \E$, it must also hold that $\Psi \E = \E \Psi = \E$. Furthermore, since $\E$ is a conditional expectation, $\E \geq_{cp} c \Psi$ for uniform $c$ given by the Pimsner-Popa index as in Equation \eqref{eq:pimsnerpopa}. Hence $(1 - \zeta + \zeta c) \E \geq_{cp} (\Phi^{t}_* \Phi^{t})^k $. Since $\zeta$ can be made arbitrarily small, the cp-order bound converges.

We recall \cite[Theorem 2.6]{gao_complete_2025}, which shows that
\[ D(\Phi^{t}(\rho) \| \Phi^{t} \circ \E(\rho)) \leq (1 - 2 k_{t}) D(\rho \| \E(\rho)) \pl. \]
Lemma \ref{lem:converge2} yields the corresponding family of rotations $(R^t)$ for which $\Phi^{t} \circ \E = \E \circ R^t$.
\end{proof}

A subtle but essential difference between Lemma \ref{lem:semigroupcsdpi} and the bound implied by \cite[Theorem 2.6]{gao_complete_2025} is that in the left-hand side of the inequality, the second argument to relative entropy is $\E(R^t(\rho)) = \E(\Phi^t(\rho))$, rather than $\Phi^t \circ \E(\rho)$. It is because of this order of operations that unlike \cite[Theorem 2.6]{gao_complete_2025}, one may iterate Lemma \ref{lem:semigroupcsdpi} to obtain the following, motivating Theorem:
\begin{theorem}[Technical Version of Theorem \ref{thm:introdecay}] \label{thm:maindecay}
Let $(\Phi^t)_{t \geq 0}$ be a finite-dimensional, unital QMS. For any $\tau > 0$, there exists a minimal $k_\tau \in \NN$ such that
\begin{equation*}
(1 + 1/10) \E \geq_{cp} (\Phi^{\tau}_* \Phi^{\tau})^{k} \geq_{cp} (1 - 1/10) \E
\end{equation*}
for all $k \geq k_\tau$. Furthermore, there exists a family of unitary rotations $(R^t)_{t \geq 0}$ for which $\Phi^{t} \E = \E \Phi^{t} = R^{t} \E = \E R^{t}$, and
\begin{equation*}
\begin{split}
& D((\Phi^t \otimes \Id^B)(\rho) \| ((\E \circ \Phi^{t}) \otimes \Id^B)(\rho)) \mightbenewline \mightbealign
 \leq (1 - 1 / 2 k_{\tau})^{\lfloor t / \tau \rfloor} D(\rho \| (\E \otimes \Id^B)(\rho))
\end{split}
\end{equation*}
for all $t > 0$ and every input density $\rho$, where $\lfloor \cdot \rfloor$ denotes the floor function, and $B$ is any finite-dimensional auxiliary system.
\end{theorem}
\begin{proof}
We apply Lemma \ref{lem:semigroupcsdpi} after noting that $\Phi^t \circ \E \circ R^w = \Phi^{t - \tau} \circ \E \circ R^{w + \tau}$ for any $\tau \leq t$ and $w \geq 0$. Hence
\[ D(\Phi^w(\rho) \| \E_w(\rho)) \geq (1 - 1 / 2 k_{\tau}) D(\Phi^{w + \tau}(\rho) \| \E( R^{w + \tau}(\rho))) \pl. \]
Starting at $w = 0$, we iterate, obtaining that
\[ D(\Phi^{\lfloor t/\tau \rfloor}(\rho) \| \E (R^{\lfloor t/\tau \rfloor}(\rho))) \leq D(\rho \| \E(\rho)) \pl. \]
The final inequality of the Theorem then follows from the data processing inequality and that $\E R^{t - \tau \lfloor t/\tau \rfloor} = \Phi^{t - \tau \lfloor t/\tau \rfloor} \E$.
\end{proof}

We emphasize that Counterexample \ref{cexam:csdpi} illustrates how one cannot simply iterate \cite[Theorem 2.6]{gao_complete_2025} to obtain a result analogous to Theorem \ref{thm:maindecay}, as such would imply too general a bound. Instead, we use the smoothness implied by the QMS structure to relate eventual decay in cp-order to early-time decay in relative entropy.

Theorem \ref{thm:maindecay}'s generality in light of Theorem \ref{thm:zvscmlsi} constrains the size of $\lambda_\tau$ for small $\tau$. Given the breakdown of CMLSI in general, it must sometimes hold that $\lambda_\tau \rightarrow 0$ as $\tau \rightarrow 0$. Indeed, the intuition for this constraint is apparent from Counterexample \ref{cexam:counterchain}.

 \section{Examples and Applications} \label{sec:examples}
In this Section, logarithms are taken with base 2 unless specified as ``$\ln$'' for convenience in describing qubit-based examples. These examples take inspiration from earlier results on effective Zeno dynamics in systems with strong, local dissipation \cite{zanardi_coherent_2014, popkov_effective_2018, burgarth_generalized_2019}.
\begin{cexam}[Failure of (C)MLSI] \normalfont \label{cexam:counterchain}
In this counterexample, consider an $n$-qubit Hamiltonian of the form
$\Ha = \sum_{j=1}^n \Ha_{j,j+1} + \Ha_j \pl.$
This form represents nearest-neighbor interactions on a one-dimensional chain with open boundary conditions. Notable examples include Heisenberg and Ising models in one spatial dimension. Add a dissipative generator $\S = \S_1 \otimes \id^{\otimes n-1}$, which acts only on the leftmost qubit. Physically, one may think of such a system as well-isolated from its noisy environment except for the left end of the chain. Until the $(n-1)$th term in a Taylor expansion of the QMS around $t=0$, no term contains more than $n-2$ qubit swaps. Via continuity of relative entropy to a subalgebra restriction (see \cite[Lemma 7]{winter_tight_2016} and \cite[Proposition 3.7]{gao_relative_2020}),
\[ D(\Phi^t(\rho) \| \Phi^t(\E(\rho))) \geq (1 - O(t^{n-1} \log t)) D(\rho \| \E(\rho)) \pl. \]
Since $\Phi^t$ cannot have MLSI with any $\lambda > 0$, it does not have MLSI.

\begin{figure}[h!] \scriptsize \centering
	\begin{subfigure}[b]{0.3\textwidth}
		\vspace{5.5mm}
		\includegraphics[width=0.98\textwidth]{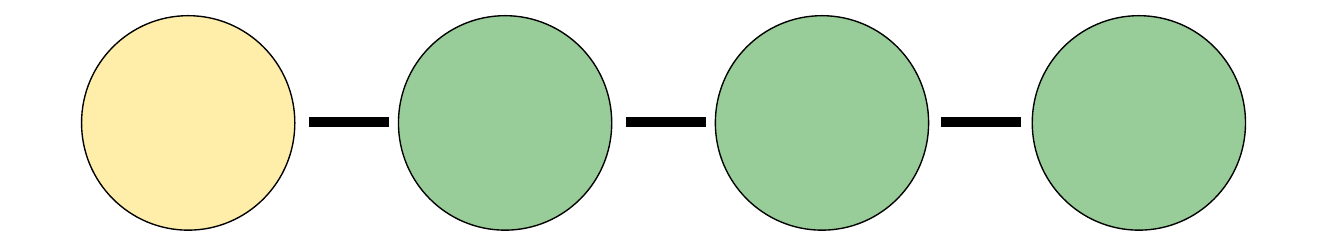}
		\caption{}
	\end{subfigure}
	\hspace{5mm}
	\begin{subfigure}[b]{0.25\textwidth}
		\includegraphics[width=0.98\textwidth]{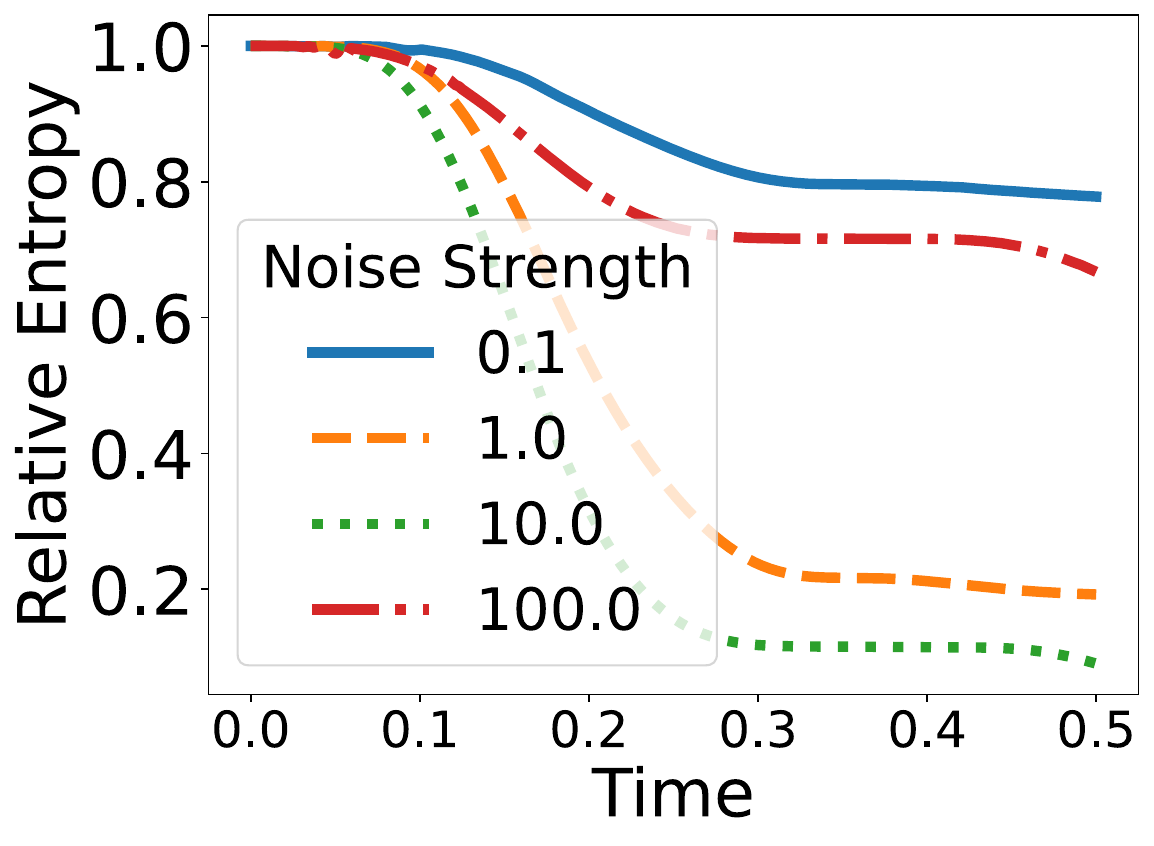}
		\caption{}
	\end{subfigure}
    \hspace{10mm}
	\begin{subfigure}[b]{0.25\textwidth}
	    \includegraphics[width=0.98\textwidth]{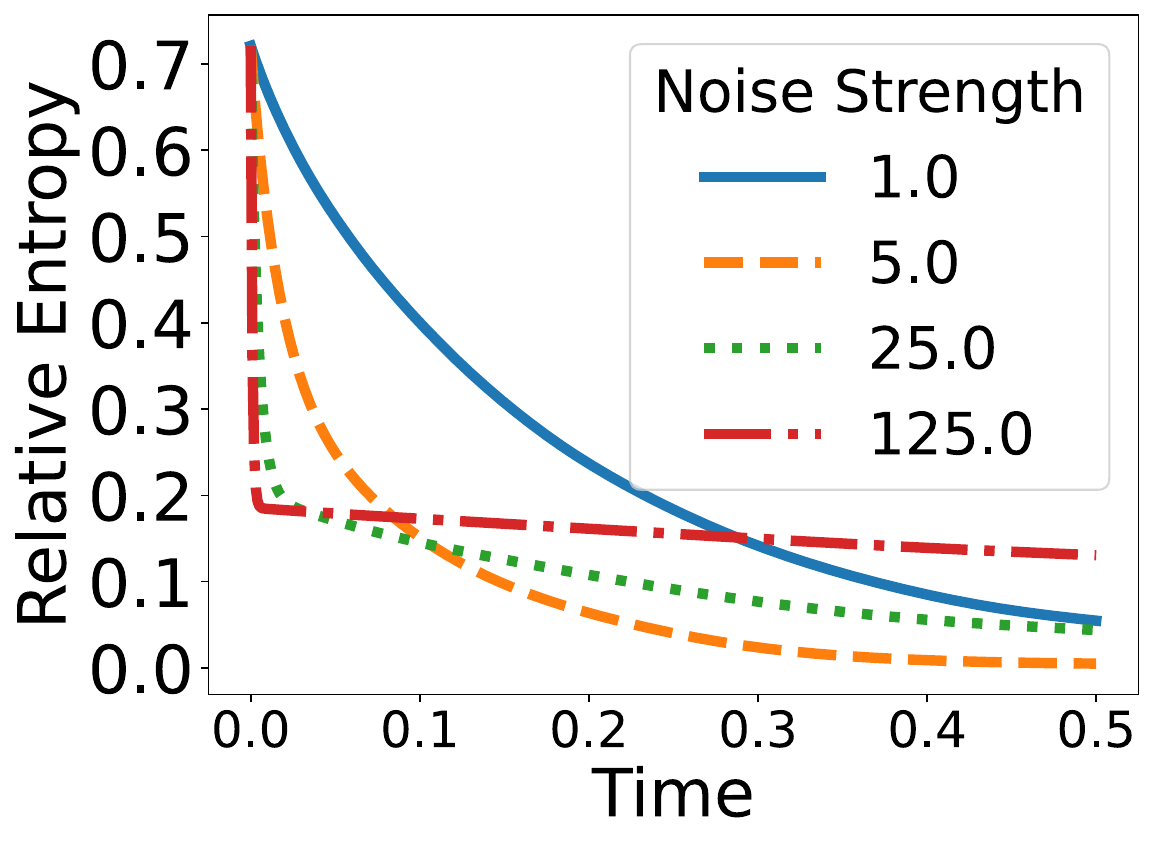}
	    \caption{}
	\end{subfigure}
	\caption{Relative entropy of a 4-qubit spin chain to (fully decayed) complete mixture: (1) Spin chain illustration. The noised qubit is on top and shaded yellow. The 3 qubits below are shaded green with nearest-neighbor interactions. (2) Relative entropy vs. time with input $(\id/2)^{\otimes 3} \otimes \ketbra{0}$, where the legend notes $\gamma$ in Equation \eqref{eq:chainl}. (3) Relative entropy vs. time averaged over 50 randomly selected input densities. \label{fig:spinchain}}
\end{figure}
Using Qiskit Dynamics's Lindbladian solver, we plot relative entropy decay for $XX + YY$ nearest neighbor interactions (see \cite{noauthor_solving_2021} for a simple example of a similar system) on 4 qubits. The simulated Lindbladian has the form
\begin{equation} \label{eq:chainl}
\L(\rho) = i \Big [ 2 \pi \sum_{j=1}^3 (X_j X_{j+1} + Y_j Y_{j+1} ), \rho \Big ] + \gamma S_{dep}(\rho) \otimes \id^{(2-4)} \pl,
\end{equation}
where $\id^{(2-4)}$ denotes the identity operator on the 2nd-4th qubits, $S_{dep}$ is the depolarizing generator as in Equation \eqref{eq:depdisp}, and $X,Y,Z$ denote unnormalized Pauli matrices. By $\rho^{(2-4)}$ we denote the marginal of density $\rho$ on the 2nd-4th qubits. The fixed point is complete mixture, and the initial initial state is one of $\ket{0000}$, $(\id/2)^{\otimes 3} \otimes \ketbra{0}$, or a density randomly generated using seed values 100-149 using the \textit{qiskit.quantum\_info.random\_density\_matrix} method with the Hilbert-Schmidt metric. Results are plotted in Figure \ref{fig:spinchain}. The parameter $\gamma$ multiplies the noise terms, controlling the strength of noise relative to time and interaction terms. Note that one may technically define the depolarizing channel with depolarizing ``probability" greater than 1 - we do not study this regime. Strong noise in our case refers to extremely fast decay to complete mixture, not to extending the parameter range beyond that. Von Neumann entropies are calculated using Qiskit's `quantum\_info.entropy' subroutine and subtracted from the maximum possible entropy to obtain the relative entropy with respect to the completely mixed fixed point state.

Figure \ref{fig:spinchain} shows simulations of the model defined by Equation \eqref{eq:chainl} with initial state $(\id/2)^{\otimes 3} \otimes \ketbra{0}$. Initial decay in Subigure \ref{fig:spinchain}.(2) is of subleading order, showing a violation of CMLSI. More striking is the inversion in the relationship between noise strength and entropy decay when going from 10.0 to 100.0. While decay rate expectedly correlates with noise strength for weak noise, the relationship soon inverts. One sees in Figure \ref{fig:spinchain} a rebound effect in which stronger noise begins to increase rather than decay the relative entropy at fixed time. Extremely strong noise suppresses the interaction between the noised qubit and others, slowing its own spread.

As noted in Remark \ref{rem:dd}, self-restricting noise is not the result of dynamical decoupling. Furthermore, there is no truly decoherence-free subspace in the long-time limit of this example - the late-time fixed point of the QMS is complete mixture. There is also no active error correction, and since the noise is unital, also no way for passive error correction in the usual sense to emerge.
\end{cexam}

\begin{exam}[Decay Rate vs. Noise Strength] \normalfont
Consider the following scenario: two qubits $A$ and $B$ undergo coherent time-evolution under an interaction Hamiltonian $\Ha = Z \otimes X / 2$, while $A$ undergoes depolarizing noise. Here the system's evolution is described by the Lindbladian
\begin{equation} \label{eq:exprl}
\L(\rho) := i [Z \otimes X / 2, \rho] + \lambda_0 (\S_{dep} \otimes \Id^B)(\rho) \pl.
\end{equation}
The fixed point state of the stochastic part, $\lambda_0 (\S_{dep} \otimes \Id^B)(\rho) = - \lambda_0(\id / 2 \otimes \rho^B - \rho)$, is $\E_0(\rho) := \id / 2 \otimes \rho^B$. That of $\L$ is $\E(\rho) = \id / 4$, complete mixture. Here $\E_0(\Ha) = 0$, generating the identity.  Let $\Phi_{ZX(t)}$ denote the unitary generated by $\Ha$ in time $t$.

Numerical results are simulated for time-evolution under $\L$ using Qiskit and Qiskit Dynamics.
\begin{figure}[h!] \scriptsize \centering
	\includegraphics[width=0.64\textwidth]{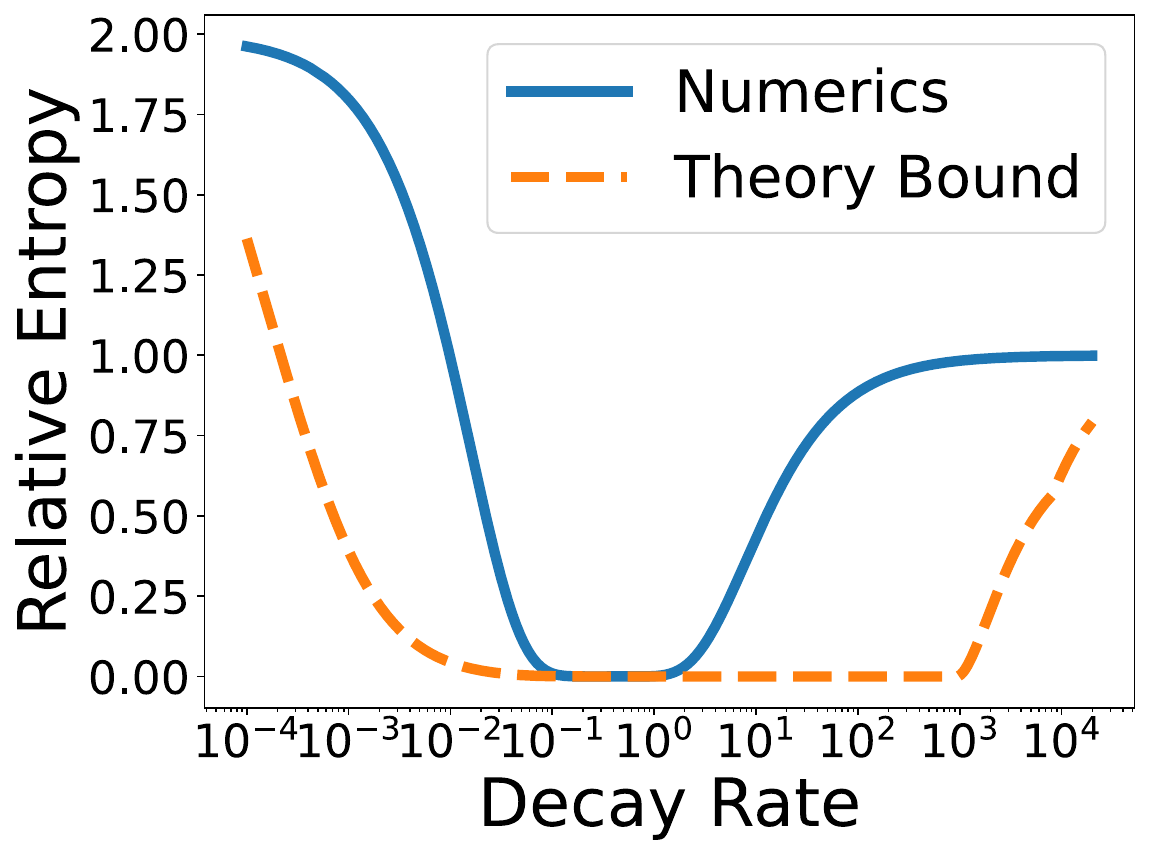}
	\caption{Relative entropy of qubit $B$'s state to the fixed point, which for this input is completely mixed, computed numerically and compared to theoretical lower bounds. The decay rate is $\lambda_0$ as in Equation \eqref{eq:exprl}. The plot is scaled logarithmically on the horizontal axis and linearly on the vertical. \label{fig:simulation}}
\end{figure}
The time is fixed to $t=4 \pi$, so a fully coherent rotation results in perfect fidelity with the original state. An initial state of $\ketbra{00}$ eventually decays to complete mixture. The dissipative term strength $\lambda_0$ varies from 0.0001 to 20000.0. Results appear in Figure \ref{fig:simulation}.

The theoretical bound is from Corollary \ref{cor:ctsfinal}. Note that $\|\cdot \mapsto [Z \otimes X / 2, \cdot] \| = 1$, and $\exp( - \lambda_0 \tau \S_{dep} \otimes \Id^B)(\rho) = \exp(- \lambda_0 \tau) \rho + (1 - \exp(- \lambda_0 \tau)) \E_0(\rho)$. Hence $\tau = (\ln 2) / \lambda_0$ to achieve $\epsilon = 1/2$. For $\L$ as in Equation \eqref{eq:exprl},
\[ \big \| e^{- \L}(\rho) - \E_0 (\rho) \big \|_{1} \leq \frac{14 e^2 \lceil 4 \pi \rceil \ln 2}{\lambda_0} \pl, \]
Via the triangle inequality,
\[ \big \| \Phi^t(\rho) - \E(\rho) \big \|_1 \geq \big \| \E_0(\rho) - \E(\rho) \big \|_1  - \big \| e^{- \L}(\rho) - \E_0 (\rho) \big \|_1 \pl, \]
and $\| \E_0(\rho) - \E(\rho) \big \|_1 \leq 1$ via direct calculation. Since $\E(\rho)$ is completely mixed, $\Phi^t(\rho)$ always commutes with it. Therefore a refinement of Pinkser's inequality for total variation distance given by \cite[Equation (16)]{canonne_short_2023} originally from \cite{vajda_note_1970} applies:
\[ D(\rho \| \E(\rho)) \geq \frac{1}{\ln 2} \Big ( \ln \Big ( \frac{1 + d_{tv}}{1 - d_{tv}} \Big ) - \frac{2 d_{tv}}{1 + d_{tv}} \Big ) \pl, \]
where $d_{tv} = d_{tv}(\rho, \E(\rho))$ is the trace distance for commuting densities or classical variables. As an alternative, Lemma \ref{lem:wsb} yields an upper bound on $| D(\Phi^t(\rho) \| \E(\Phi^t(\rho))) - D(\E_0(\rho) \| \E(\rho)) |$ in terms of $\| e^{- \L}(\rho) - \E_0 (\rho) \|_1$, and $D(\E_0(\rho) \| \E(\rho)) = 1$. Combining the two bounds yields a theoretical lower bound on relative entropy for large $\lambda_0$. For small $\lambda_0$,  a direct bound follows \cite[Lemma 4.3]{laracuente_information_2023}, which states that since $\rho \leq c \E(\rho)$,
\[ D((1 - \zeta) \rho + \zeta \E(\rho) \| \E(\rho)) \geq \sup_\tau \frac{(1 - \zeta)^2}{\tau + \zeta}
    \Big ( 1 - \frac{\tau (1 - \ln \tau)}{\kappa(c)} \Big ) D(\rho \| \E(\rho)) \pl, \]
where $\kappa(c) = (c \ln c - c - 1) / (c-1)^2$. Combining bounds yields the theory curve in Figure \ref{fig:simulation}.

For very small values of $\lambda_0$, the state expectedly does not decay noticeably. With increasing $\lambda$, a region of strong decay appears, where a time of $4 \pi$ is long enough to mostly dephase qubit $B$ under almost continuous interactions with a regularly noised qubit $A$. With large $\lambda_0$, however, the relationship inverts. As the noise channel approaches completely depolarizing, decays slow with increasing $\lambda_0$. Analyzing distinct regimes as in Figure \ref{fig:simulation}:
\begin{itemize}
	\item For small $\lambda_0$, $\Phi^t(\rho) \approx \exp(- i t [\Ha, \cdot])(\rho)$. Relative entropy decays slowly.
	\item The regime of strongest overall decay appears for intermediate values of $\lambda_0$.
	\item When $\lambda_0$ is large and $t$ not too small, $\Phi^t \approx \tilde{R}_t \E_0$ for rotation $\tilde{R}_t$ generated by $\E_0(\Ha)$. This modified Hamiltonian generates Zeno dynamics with the same invariant subspace as $\E_0$ and $\S$.
    Self-restriction as in Theorem \ref{thm:zvscmlsi} protects information in the fixed point subspace of evolution generated by $\S$, which might be larger than the fixed point subspace of evolution generated by $\L$.
\end{itemize}
The theory bounds in this example are \textit{expectedly} loose, as they rely on Theorems designed to apply to all input states, while the specific input state chosen is intended to illustrate effects in a convenient range of values.
\end{exam}

Corollary \ref{cor:ctsfinal} does not immediately follow from \cite{burgarth_generalized_2019}. A subtle but essential difference is that the main Theorems of \cite{burgarth_generalized_2019} control the relative strength of processes through an explicit multiplier, ``$\gamma$." In contrast, Theorem \ref{thm:zvscmlsi} uses the CMLSI constant of the dissipative part of the process. The following example illustrates a scenario in which growth of the CMLSI constant emerges not from an explicit multiplier but from the internal structure of the process:
\begin{exam} \normalfont \label{ex:graphs}
Let $G$ be a finite, undirected graph on $n$ vertices, defined as as a set of pairs $\{i,j\} : i,j \in 1...n$. Let
\[ \Phi_{l,j}(\rho) = \ket{l}\braket{j | \rho | j} \bra{l} + \ket{j} \braket{l | \rho | l} \bra{j}
		+ \Big (\sum_{s \neq l,j} \ketbra{s} \otimes \id^B \Big ) \rho \Big (\sum_{r \neq l,j} \ketbra{r} \otimes \id^B \Big ) \]
represent a single edge on Hilbert space of dimension $n$, with the possibility of extension by an arbitrary auxiliary system with an interaction Hamiltonian $\Ha$. As noted in \cite{laracuente_quasi-factorization_2022}, the complete graph Lindbladian given by
\[ \S_n(\rho) = \frac{1}{2} \sum_{i, j \in 1...n} \Big ( \rho - \Phi_{i,j}(\rho) \Big ) \]
has $O(n)$-CMLSI. Hence via Corollary \ref{cor:ctsfinal}, $\S_n$ has an increasing CMLSI constant with $n$, even though increasing $n$ does not multiply the Lindbladian by a constant larger than 1. Accelerated convergence to the Zeno limit arises because the structure of all-to-all interactions. Hence this family of Lindbladians exhibits growing decay rate that is not captured by an overall factor multiplying the Lindbladian.
\end{exam}

\begin{cexam}[Non-iterability of (C)SPDI] \normalfont \label{cexam:csdpi}
In Section \ref{sec:decayyes}, we considered whether the CSPDI inequality implied by \cite[Theorem 2.6]{gao_complete_2025} generally iterates to imply a discrete-time (C)MLSI-like inequality. Here we show a counterexample. Consider a channel on the $n$-qubit spin chain:
\[ \Phi(\rho) := U_{cyc} ( (1-\epsilon) \frac{\id}{2} \otimes \rho^{(>2)} + \epsilon \rho ) U_{cyc}^\dagger \pl, \]
where $\rho^{(>2)}$ denotes the state on all but the first qubit, and $U_{cyc}$ applies the leftward cyclic qubit permutation $1 \leftarrow 2, 2 \leftarrow 3, ... , n-1 \leftarrow n, n \leftarrow 1$. One may easily see that $\Phi^n$ is a completely depolarizing channel. For any $m \in $ such that $1 < m < n$, the action of $\Phi^m$ on the state $(\id/2)^{\otimes m} \otimes \ketbra{0}^{\otimes n-m}$ is perfectly invertible and equivalent to the leftward cyclic permutation. Hence there is no initial decay to the eventual fixed point subspace.

Examining $\Phi_* \Phi$, the problem is clear: this composition fully depolarizes the first qubit, applies the cyclic permutation, reverses the cyclic permutation, and then applies the same (now trivial) depolarization. Subsequent applications merely continue depolarizing the first qubit. In \cite{gao_complete_2025}, imposing the detailed balance condition excludes channels such as this $\Phi$ by forbidding most forms that include unitary time-evolution. In Theorem \ref{thm:maindecay}, the analytic structure of the QMS precludes the decoherence-free subspace from changing at finite time.
\end{cexam}

\begin{exam}[Cross-Resonance Hamiltonian with Markovian Dephasing] \normalfont \label{exam:qi}
In principle, inequalities such as CMLSI and CSDPI lower bound decoherence rates, as well as quantum capacities and other key notions for quantum information. The imposition of detailed balance is however a substantial restriction, as it precludes studying systems governed by unitary drift or during active quantum processing. Noise during quantum processing or other dynamics is often larger than that during idle times, and the majority of such noise in superconducting qubits often arises from two-qubit interactions \cite[p1026]{murali_noise-adaptive_2019}. In this example, we show how passive dephasing noise during the executation of an entangling gate on superconducting qubits depolarizes the second qubit. We denote the 2-qubit system $A \otimes B$. One entangling gate is via the cross-resonance (CR) interaction between two qubits.  Let
\begin{equation} \label{eq:crcoeffs}
\begin{split}
    & \Ha_0 = \frac{Z}{2} \otimes \big ( \omega_{ZI} \id + \omega_{ZX} X + \omega_{ZY} Y + \omega_{ZZ} Z \big ) \text{ ,}
    \\ & \Ha_1 = \frac{\id}{2} \otimes \big ( \omega_{IX} X + \omega_{IY} Y + \omega_{IZ} Z \big ) \text{ , and }
    \\ & \S(\rho) = \frac{1}{T_2} \big (2 \rho - (Z \otimes \id ) \rho (Z \otimes \id) 
        - (\id \otimes Z) \rho (\id \otimes Z) \big ) \pl.
\end{split}
\end{equation}
The CR Hamiltonian is given by $\Ha_{CR} = \Ha_0 + \Ha_1$
as described in \cite{alexander_qiskit_2020}, where the coefficients $\omega_{??}$ are hardware-specific. Although the gate's precise implementation involves a time-dependent turning on and off of the Hamiltonian, \cite{alexander_qiskit_2020} notes a reasonable approximation is that the unitary gate $U_{CR} \approx \exp(-i t_{CR} \Ha_{CR})$, as the true pulse contains short rise and fall intervals surrounding a longer interval in which the effective Hamiltonian is approximately constant.

Passive noise in qubits is often modeled as a combination of non-unital relaxation with characteristic time denoted $T_1$ and dephasing noise with characteristic time $T_2$, modeled by
\[ \Phi_{deph}^t(\rho) = e^{- t / T_2} \rho + \big ( 1 - e^{- t / T_2} \big ) \E_Z(\rho) \pl, \]
where $\E_Z$ is the pinching map to the computational or Pauli $Z$ basis \cite{krantz_quantum_2019}. When $T_1 >> T_2$, we may consider the two qubits in the CR gate to undergo individual dephasing noise simultaneous with the coupling gate. This process is modeled by the Lindbladian $\L(\rho) := i [\Ha_{CR}, \rho] + \S(\rho)$ generating the QMS $\Phi^t_{CR}$.
Though the decoherence rate will depend on specifically on the coefficients in Equation \eqref{eq:crcoeffs}, most combinations will yield that time-evolution under $H_{CR}$ does not commute with $\Id \otimes \E_Z$, though it does commute with $\E_Z \otimes \Id$. Via Lemma \ref{lem:converge2}, the Hamiltnoian must therefore reduce the fixed point subspace. In particular, we see that as long as $\omega_{IX}$ or $\omega_{IY}$ is non-zero, the Lindbladian given by $\cdot \mapsto i[\Ha_1, \cdot] + \S$ has fixed point projection $\E$ given by $\E(\rho) := \E_Z(\rho^A) \otimes \id / 2$. Because $\Ha_0$ applies only $Z$-basis coupling to qubit $A$, it does not change the fixed point algebra on that qubit. Since the fixed point subspace must be smaller than that projected to be $\E_Z \otimes \E_Z$ and remain unital, it must be that projected to by $\E$.

Theorem \ref{thm:zvscmlsi} shows that this noisy cross-resonance Lindbladian does not have CMLSI. Furthermore, if the noise were overpoweringly strong, an input $\rho$ would quickly decay to $(\E_Z \otimes \E_Z)(\rho)$ and slowly to $\E_Z(\rho^A) \otimes \id/2$. Theorem \ref{thm:maindecay} nonetheless shows that for any fixed $\tau > 0$,
\[ D(\Phi_{CR}^\tau(\rho) \| \E_{Z}(\rho^A) \otimes \id / 2) \leq \lambda_\tau D( \rho \| \E_{Z}(\rho^A) \otimes \id / 2) \pl. \]
for some $\lambda_\tau < 1$ that is independent of $\rho$. A long, noisy cross-resonance interaction or repeated cross-resonance gate application with dephasing noise does result in exponential decay towards a fixed point subspace in which qubit $B$ is fully depolarized.
\end{exam}
A long-term motivation to study noise is understanding decoherence affecting quantum computers, memories, communication links, and related systems. Though fault-tolerant error correction presents an eventual solution to noise in quantum computation, there are still major challenges in the near term \cite{preskill_quantum_2018, aharonov_polynomial-time_2023}. In the long term, noise will still contribute to the number of physical qubits required to implement fault-tolerant error correction, so understanding and reducing noise will help reduce the cost and energy demands of fault-tolerant quantum computers. Furthermore, noise in long-distance communication links and long-term memory will not simply disappear with error correction but impact error-corrected capacities \cite{wilde_quantum_2017}. As in Example \ref{exam:qi}, extending CMLSI-like bounds to situations that combine unitary dynamics with noise is a step toward modeling systems that often appear in quantum information, including quantum processing. There are several relevant issues to address in future work, including time-dependent Hamiltonians and how they combine with non-unital noise, in order to complete the picture of CMLSI for pulsed gates with Markovian noise. In full generality, one likely must consider non-Markovian dynamics to fully understand the noise arising during quantum gates.



\section{Conclusions and Further Open Problems} \label{sec:conclusions}
As Theorem \ref{thm:zvscmlsi}, conditions were determined for when adding a Hamiltonian term to a trace-symmetric Lindbladian generator breaks or preserves (C)MLSI. A point we re-emphasize is that unlike in dynamical decoupling, adding a Hamiltonian to unital, Markovian noise does not slow the growth of entropy. Rather, CMLSI typically fails because the decoherence-free subspace of the new Lindbladian is smaller than that of the original, so the original dissipator no longer induces the full range of eventual decay.

A long-running problem in semigroup theory is to understand the most general class of canonical dissipator \cite{hayden_canonical_2022}. In this paper, we focus on how adding a Hamiltonian affects decay rate. It remains open whether there exist canonically dissipative (having no extractable, non-trivial Hamiltonian part), finite-dimensional generators that do not decay states to a fixed point subspace or do so slower than exponentially.

Left open by Theorem \ref{thm:maindecay} is whether an analogous decay estimate holds for Lindbladians of the form in Equation \eqref{eq:ldecomp}, but where $\S$ has (C)MLSI with a \textit{non-unital} $\E_0$. In this scenario, the fixed point subspace projected to be $\E$ is not obviously contained in that by $\E_0$, or the full time-evolution may exhibit limit cycles \cite{walter_quantum_2014, ben_arosh_quantum_2021}. The Hamiltonian might more directly counteract effects of the dissipator. Mathematically, the primary barrier to extending Theorem \ref{thm:maindecay} beyond unital channels is to generalize properties of the multiplicative domain to the non-unital case. For example, an appropriate generalization of \cite[Theorem 1.3]{rahaman_multiplicative_2017} would strongly hint at the proper definition of a fixed point up to rotation in the non-unital case. As noted in \cite{choi_multiplicative_2009}, however, the connection between the multiplicative domain, unitarily correctable subspace, and fixed point subspace of $\Phi \circ \Phi^*$ for a channel $\Phi$ is weaker without unitality. Deriving and applying an analog of \cite[Theorem 1.3]{rahaman_multiplicative_2017} without this correspondence might be the topic for another work. We therefore leave this generalization to future investigation.

Relatedly, another natural question is whether all QMSs decay states to a fixed point subspace up to persistent rotations. In this work, we exploited a connection to the multiplicative domain for which more in the unital channel setting. Physically, the restriction to unitality in Theorem \ref{thm:maindecay} ensures that coherent dynamics ultimately commute with projection to the overall fixed point subspace, so the fixed point subspace of the dissipative component is always at least as large as that of the combined process. Such an assumption is potentially violated when combining a non-unital dissipator with Hamiltonian time-evolution. Furthermore, initial relative entropy is unbounded for fixed point states with arbitrarily small components. These complications naturally extend to those encountered with non-Markovian processes. In full generality, time-varying unitary dynamics might preclude an analogous notion of a fixed point or decoherence-free subspace, requiring a different formulation of decay.

 \section{Acknowledgements}
We thank John Smolin and Marius Junge for helpful feedback. NL was previously supported by the IBM Postdoc program at the Chicago Quantum Exchange. 

\section{Statement on Competing Interests}
The author declares no competing interests.

\section{Data and Code Availability}
All code and data that support the plots within this paper and other findings of this study are available from the corresponding author upon reasonable request.

\bibliography{Drift}

\newpage

\appendix
\newpage

\section{Strong Damping, Zeno Dynamics, and Entropy Decay} \label{app:Zeno}
Much of this Section is devoted to a technical reanalysis of the generalized Zeno, strong damping, and related effects \cite{burgarth_generalized_2019, mobus_quantum_2019, burgarth_quantum_2020}. Our estimates are based on cp-order inequalities and seek comparability with CMLSI and $\lambda$-decay. The bounds derived herein are nonetheless in terms of norms. These bounds are in principle very general, requiring only sub-multiplicativity of $\| \cdot \|_{\A \rightarrow \B}$ in addition to its being a norm. A restriction, however, is that many of the results must assume contractivity of most or all maps involved. The diamond norm is especially convenient in this sense, as channels automatically satisfy this assumption. Though many of the techniques do not require finite dimension, we assume that the diamond norm is well-defined and bounded for the superoperators considered herein.

To simplify notation, let
\begin{equation}
\epow{a}{m} := \frac{a^m \exp(a)}{m!}
\end{equation}
for any scalar $a > 0$, $k \in \NN$.
\begin{rem} \label{rem:expapprox}
For any $a > 0$ such that $\exp(a)$ equals its Taylor series,
\[ \exp(a) - \sum_{n=0}^k \frac{a^n}{n!} = \sum_{n={k+1}}^\infty \frac{a^n}{n!} = a^{k+1} \sum_{n=0}^\infty \frac{a^n}{(n+k+1)!}
	\leq \frac{a^{k+1} \exp(a)}{(k+1)!} = \epow{a}{k+1} \pl. \]
\end{rem}
\begin{lemma} \label{lem:exphaml}
Let $\L$, and $\E$ be respectively a Lindbladian and a map on the same von Neumann algebra. Let $\A$ be a normed input subspace that is preserved by $\L$, and $\B$ be the normed output space of $\E$. Then for any $t \in \RR$
\[ \Big \| \E \circ  \sum_{m=k}^\infty \frac{(it)^m}{m!} \L^m \Big \|_{\A \rightarrow \B, (cb)}
	\leq \epow{\|\L\|_{\A, (cb)} t}{k} \|\E\|_{\A \rightarrow \B, (cb)} \pl. \]
\end{lemma}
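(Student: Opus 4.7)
The plan is to treat this as a direct norm estimate using only sub-multiplicativity of the relevant operator norm together with the tail bound of the exponential series already established in Remark \ref{rem:expapprox}. Since $\L$ preserves the subspace $\A$, each power $\L^m$ defines a map $\A \to \A$ and $\E$ composes with it to give a map $\A \to \B$, so all norms in the statement are well-defined. The (cb) version follows verbatim after tensoring with $\idd^C$ and observing $(\L \otimes \idd^C)^m = \L^m \otimes \idd^C$, so I carry out the argument in the non-cb case.

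First I would pull $\E$ out of the composition: by sub-multiplicativity of $\|\cdot\|_{\A \to \B}$ under composition with $\A \to \A$ maps,
\[ \Big \| \E \circ  \sum_{m=k}^\infty \frac{(it)^m}{m!} \L^m \Big \|_{\A \to \B}
	\leq \|\E\|_{\A \to \B}  \Big \| \sum_{m=k}^\infty \frac{(it)^m}{m!} \L^m \Big \|_{\A} \pl. \]
Then I would apply the triangle inequality termwise together with sub-multiplicativity of $\|\cdot\|_{\A}$ on powers of $\L$:
\[ \Big \| \sum_{m=k}^\infty \frac{(it)^m}{m!} \L^m \Big \|_{\A}
	\leq \sum_{m=k}^\infty \frac{|t|^m}{m!} \|\L\|_{\A}^m
	= \sum_{m=k}^\infty \frac{(|t| \|\L\|_{\A})^m}{m!} \pl. \]

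Finally I would invoke Remark \ref{rem:expapprox} with $a = |t| \|\L\|_{\A}$ (shifting its index by one, since the remark treats the tail starting at $k+1$; the same computation with the starting index $k$ gives $\sum_{m=k}^\infty a^m/m! \leq a^k \exp(a)/k! = \epow{a}{k}$). Combining the three displays yields the claimed bound, and absorbing the absolute value into the stated expression $\epow{\|\L\|_{\A} t}{k}$ (which is how the notation is used throughout, with $t$ implicitly nonnegative in the tail estimate) completes the argument.

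The only real care needed is the bookkeeping of norms: checking that $\A$ is genuinely invariant under $\L$ (so $\|\L\|_{\A}$ makes sense and submultiplies over powers) and that composition with $\E$ does not inflate the $\A \to \B$ norm beyond the stated product. In the completely bounded case the identical three-line calculation applies after the standard auxiliary tensor extension, since the tensor stabilization commutes with Taylor expansion of $\exp(it\L)$. There is no real obstacle; the lemma is a sub-multiplicativity-plus-tail-estimate bookkeeping result, and its role is to be invoked cleanly in the subsequent Zeno-style estimates.
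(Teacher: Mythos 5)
Your proposal is correct and follows essentially the same route as the paper's proof: a termwise triangle inequality, sub-multiplicativity to factor out $\|\E\|_{\A \rightarrow \B,(cb)}$ and bound $\|\L^m\|$ by $\|\L\|^m$, and then the tail estimate of Remark \ref{rem:expapprox} (with the index shift you correctly note). The only difference is the immaterial order in which you factor out $\E$ versus apply the triangle inequality.
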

\begin{proof}
First, we name a given input $\rho$ and use the triangle inequality to separate terms.
\begin{equation} \label{eq:termsplit1}
\begin{split}
\Big \| \E \circ\sum_{m=k}^\infty \frac{(it)^m}{m!} \L^m \Big \|_{\A \rightarrow \B, (cb)}
	\leq \sum_{m=k}^\infty \frac{t^m}{m!} \| \E \circ \L^m \|_{\A \rightarrow \B, (cb)} \pl.
\end{split}
\end{equation}
We then consider each term.
\[ \| \E \circ \L^m(\rho) \|_{\A \rightarrow \B, (cb)} \leq \| \E \|_{\A \rightarrow \B, cb} \| \L^m(\rho) \|_{\A, (cb)} \pl. \]
The proof then follows from Remark \ref{rem:expapprox}.
\end{proof}
\begin{lemma} \label{lem:tunsum}
Let $(f_m)_{m=1}^k, (g_m)_{m=1}^k$ be families of maps for $k \in \NN$ such that $f_m \circ f_{m-1}$ and $g_{m} \circ g_{m-1}$ are valid compositions. Then
\[ \prod_{m=1}^k f_m - \prod_{m=1}^k g_m = \sum_{l=1}^k \Big (\prod_{n=l+1}^{k} g_n \Big )(f_{l} - g_{l})\Big ( \prod_{m=1}^{l-1} f_m  \Big ) \pl, \]
Moreover, if the maps are contractions in norm $\| \cdot \|$, then
\[ \bigg \| \prod_{m=k}^1 g_m - \prod_{m=k}^1 f_m \bigg \|
	\leq \sum_{l=1}^{k} \| f_l - g_l \| \pl. \]
\end{lemma}
\begin{proof}
Let $\omega_l = \big ( \prod_{m=1}^l f_m \big ) (\rho)$ for input $\rho$ and each $l \in 1...k$. For each value of $l$,
\[ \Big ( \prod_{n=l+1}^{k} g_n \Big ) (\omega_l)  - \Big (\prod_{n=l}^k g_n \Big ) (\omega_{l-1})
	= \Big ( \prod_{n=l+1}^{k} g_n \Big) (f_l - g_l)(\omega_{l-1}) \pl. \]
The Lemma's first Equation then follows from induction.

We apply the Lemma's first Equation to the left hand side of its second, obtaining that
\[ \bigg \| \prod_{m=k}^1 g_m - \prod_{m=k}^1 f_m \bigg \|
    = \sup_\rho \frac{1}{\|\rho\|} \bigg \| \sum_{l=1}^k \Big (\prod_{n=k}^{l+1} g_n \Big )(f_{l} - g_{l}) \Big ( \prod_{m=l-1}^{1} f_m \Big ) (\rho) \bigg \| \pl. \]
Via the triangle inequality, we may separate the terms in the sum. We then split the product via submultiplicativity. The overall supremum over $\rho$ then underestimates the per-term and per-factor suprema, completing the Lemma.
\end{proof}

\begin{lemma} \label{lem:firstord}
Let $(\Phi_m)_{m=1}^k$ be a family of contractive maps on space $\A$ with submultiplicative norm $\| \cdot \|$. Let $(\L_m)_{m=1}^k$ be a family of bounded Lindbladians generating contractive semigroups. Let $t_1, ..., t_k \in \RR^+$. Then
\[ \bigg \| \prod_{m=1}^k (\Phi_m \circ e^{-\L_m t_m})- \prod_{m=1}^k (\Phi_m \circ (1 -\L_m t_m)) \bigg \|
	\leq \sum_{m=1}^k \epow{\|\L_m \| t_m}{2} \pl. \]
\end{lemma}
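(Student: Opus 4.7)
The plan is to derive this as a direct application of the telescoping identity in Corollary \ref{cor:tunnorm}, with the per-step estimate coming from the Taylor remainder bound already packaged in Remark \ref{rem:expapprox}. Specifically, I would set $g_m := \Phi_m \circ e^{-\L_m t_m}$ and $f_m := \Phi_m \circ (1 - \L_m t_m)$, so that the left-hand side of the Lemma is exactly $\|\prod_{m=1}^k g_m - \prod_{m=1}^k f_m\|$. Corollary \ref{cor:tunnorm} then gives
\[
\Big\| \prod_{m=1}^k g_m - \prod_{m=1}^k f_m \Big\| \;\le\; \sum_{l=1}^k \Big\| \prod_{n=l+1}^k g_n \Big\|\, \Big\| \prod_{n=1}^{l-1} f_n \Big\|\, \|f_l - g_l\|,
\]
which has essentially the shape of the right-hand side of the Lemma. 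The task is then to control the three pieces of each summand.

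For the per-factor difference, I would Taylor expand the exponential and observe
\[
f_l - g_l \;=\; \Phi_l \circ \Big((1 - \L_l t_l) - e^{-\L_l t_l}\Big) \;=\; -\Phi_l \circ \sum_{n=2}^{\infty} \frac{(-\L_l t_l)^n}{n!}.
\]
Applying the triangle inequality, submultiplicativity of $\|\cdot\|$, and Remark \ref{rem:expapprox} to the tail of the exponential series starting from $n=2$ yields
\[
\|f_l - g_l\| \;\le\; \|\Phi_l\|\, \epow{\|\L_l\| t_l}{2},
\]
with the convention that $\|\Phi_l\| \le 1$ in the intended (contractive) settings such as the diamond norm, so this reduces to the $\epow{\|\L_l\| t_l}{2}$ factor appearing on the right-hand side of the Lemma.

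For the leading block $\prod_{n=1}^{l-1} f_n$, I would invoke contractivity of each $f_n = \Phi_n \circ (1 - \L_n t_n)$: in the target settings both $\Phi_n$ is a contraction and $(1 - \L_n t_n)$ is either a contraction directly or a first-order approximation whose operator norm is bounded by $1 + O(t_n)$, which is absorbed into the remainder via submultiplicativity. Hence $\|\prod_{n=1}^{l-1} f_n\| \le 1$ (or is absorbed as a harmless prefactor), which eliminates this block from the bound. The trailing block $\prod_{n=l+1}^k g_n = \prod_{n=l+1}^k \Phi_n \circ e^{-\L_n t_n}$ remains explicitly, matching the product factor $\| \prod_{m=r+1}^k \Phi_m e^{-\L_m t_m/k}\|$ on the right-hand side (up to the apparent $t_m/k$ notational convention used in the statement).

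The only real subtlety — and the step I would scrutinize most carefully — is the contractivity assumption on the $f_n$'s: for large $t_n$ the linear approximant $1 - \L_n t_n$ need not be a contraction, so the Lemma is really meaningful in the regime of small $t_n$ (as it is applied, with $t_n = O(1/k)$). This is consistent with its use together with Proposition \ref{prop:hsandwich} in the Zeno limit $k \to \infty$. Once this regime is granted, assembling the bounds above reproduces the stated inequality term by term.
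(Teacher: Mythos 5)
Your proposal matches the paper's own proof essentially exactly: the paper likewise bounds $\|e^{-\L_m t_m} - (1-\L_m t_m)\| \leq \epow{\|\L_m\| t_m}{2}$ via the Taylor-remainder estimate of Remark \ref{rem:expapprox} and then invokes Corollary \ref{cor:tunnorm} to telescope. Your added scrutiny of the contractivity of the leading block $\prod_{n=1}^{l-1} f_n$ (which the paper's one-line proof silently drops, and which indeed only holds cleanly in the $t_m = O(1/k)$ regime where the Lemma is used) is a legitimate point of care, not a flaw in your argument.
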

\begin{proof}
The Lemma follows from noting that $(1 -\L t)$ is the 1st order Taylor expansion of $e^{-\L t}$ for any $t \in \RR^+$, so
\[ \|e^{-\L_m t_m} - (1 -\L t_m)\| \leq \epow{\|\L_m\| t_m}{2}\]
for each $m \in 1...k$. Lemma \ref{lem:tunsum} completes this Lemma.
\end{proof}
While it is often intuitive to think of a Lindbladian as having units of inverse time and appearing alongside a time parameter in the expression $\exp(- \L t)$, $t$ is redundant in many of the mathematical expressions we will use. When $t$ is an overall parameter (not changing by interval as in Lemma \ref{lem:firstord}), we may instead write $\exp(- \L)$, implicitly absorbing $t$ as a multiplying factor in $\L$. Doing so simplifies notation, and one may trivially re-extract the parameter $t$ by substituting $\L \rightarrow t \L$ in resulting expressions.

The next Lemma uses additional notation. For $m < k \in \NN$, let
\[ WS(m,k) \subset \{ [r_0+1, ..., r_1], ..., [r_{m-1}+1, ..., r_m] : 0 = r_0 < \dots < r_m = k \}\]
denote the set of partitions of $k$ into $m$ contiguous, ordered, non-overlapping intervals. For given $W \in WS(m,k)$, let $W(j)$ denotes a contiguous sequence of indices for $j \in 1...m$. Let $|W(j)|$ denote the number of indices in $W(j)$, which we will refer to as its length. By $W(j)[l]$ we denote the $l$th index in $W(j)$ for $l \in 1...|W(j)|$. As an example, we might take $W = ([1,2,3], [4,5]) \in WS(2,5)$, in which case $W(1) = [1,2,3], W(2) = [4,5]$, and $W(2)[1] = 4$. In this example, we would have $r_1 = 3, r_2 = 5$.

\begin{lemma} \label{lem:zenolem}
Let $(\Phi_m)_{m=1}^k$ be a family of contractions for any $k \in \NN$. Let $(\L_m)_{m=1}^k$ be bounded Lindbladians such that $e^{-\L_m}$ is also contractive for each $m$. Let $\|\L\| := \max_m \{\| \L_m \|\}$. Assume for given $g \in \NN$, projection $\E$, and $\epsilon \in (0,1)$ that for any contiguous indices $1 \leq j_1, ..., j_n \leq k$, $\Phi_{j_1} ... \Phi_{j_n} \geq_{cp} (1 - \epsilon) \E$, and that $\Phi_j \E = \E \Phi_j = \E$ for each $j \in 1...k$. Then
\begin{equation*}
\begin{split}
& \bigg \| \sum_{m=0}^k \frac{(-1)^m}{k^m} \sum_{W \in WS(m+1,k)} \Big ( \Big ( \prod_{j=1}^m (\E \L_{W,j})
    - \prod_{j=1}^m (\Phi_{W(j)} \L_{W,j}) \Big ) \Phi_{W(m+1)} \Big ) \E \bigg \|
\\ & \leq \frac{g}{k} \Big ( 1 + \frac{1}{k} \Big ) \frac{1}{1-\epsilon} (\|\L\|^3 + 4 \|\L\|^2 + 2 \|\L\|) e^{\|\L\|}
\end{split}
\end{equation*}
where $\Phi_{W(j)} = \Phi_{W(j)[|W(j)|]} \circ ... \circ \Phi_{W(j)[1]}$, and each $\L_{W,j} \in (\L_m)$ is the Lindbladian appearing between the channel compositions $W(j)$ and $W(j+1)$.
\end{lemma}
\begin{proof}
For each $n \in 1...k$, let $\#(n, WS(m+1,k))$ denote the number of partitions in $WS(m+1,k)$ that contain at least one interval of length exactly $n$. To estimate $\#(n, WS(m+1,k))$, consider placing $m-1$ partition boundaries (creating $m$ partitions), then placing a final partition boundary that is exactly $n$ positions away from one of those already placed. To place the first $1...k$ boundaries has $(k+1 \text{ choose } m-1)$ choices when including placing before the 1st or after the $k$th as well as between any indices. To place the final boundary has 2 choices for each already placed boundary, although these terms are subsequently divided by a factor of 2 to account for exchange between the final and $n$-positions-away pre-existing boundary, plus another 2 choices for placement at either end of the interval, yielding a $m+1$ possibilities. The estimate $\#(n, WS(m+1,k)) \leq (k+1 \text{ choose } m-1) \times (m+1)$ is an overcount, since it may double-count if there was already a length-$n$ partition between the first $m-1$ boundaries, and since placing a boundary $n$ positions away from another may not create an $n$-length interval if another boundary is in between. To simplify, as long as $m > 1$,
\begin{equation} \label{eq:termnum}
    \#(n, WS(m+1,k)) \leq \binom{k+1}{m-1} (m+1) = \frac{(k+1) k! (m+1)}{(m-1)!(k - m + 2)!} \leq \frac{(k+1)k^{m-2} (m+1)}{(m-1)!} \pl.
\end{equation}
Futhermore, $\#(n, WS(m+1,k))$ overcounts the number of terms containing one partition of length $n$ and no smaller partitions.

Assuming that each $|W_j| \geq n$,
\begin{equation*}
\begin{split}
    & \Big ( \prod_{j=1}^m (\Phi_{W(j)} \L_{W,j}) \Big ) \Phi_{W(m+1)}
    \\ & = \prod_{j=1}^m (((1-\epsilon^{\lfloor n / g \rfloor}) \E + \epsilon^{\lfloor n / g \rfloor}  \Psi_{W(j)}) \L_{W,j}) ((1-\epsilon^{\lfloor n / g \rfloor}) \E + \epsilon^{\lfloor n / g \rfloor} \Psi_{W(m+1)} )
\end{split}
\end{equation*}
for some contractions $(\Psi_{W(j)})_{j = 1}^{m+1}$ such that $\E \Psi_{W(j)} = \Psi_{W(j)} \E = \E$. Hence
\begin{equation} \label{eq:singleterm}
\Big \| \Big ( \prod_{j=1}^m (\Phi_{W(j)} \L_{W,j}) \Big ) \Phi_{W(m+1)} \E - \prod_{j=1}^m (\E \L_{W,j}) \E \Big \|
    \leq \Big ( \prod_{j=1}^m \|\L_{W,j}\| \Big ) \big (1 - (1-\epsilon^{\lfloor n / g \rfloor})^{m} \big ) \pl,
\end{equation}
 assuming that the smallest partition in $W$ is of length at least $n$.
 Via Bernoulli's inequality, $(1 - (1-\epsilon^{\lfloor n / g \rfloor})^m \big ) \leq m \epsilon^{\lfloor n / g \rfloor}$ when $m > 0$.

To bound the full expression as in the Lemma,
\begin{equation*}
\begin{split}
& \bigg \| \sum_{m=0}^k \frac{(-1)^m}{k^m}
    \bigg (  \sum_{W \in WS(m+1,k)} \bigg (  \Big ( \prod_{j=1}^m (\Phi_{W(j)} \L_{W,j}) \Big ) \Phi_{W(m+1)} \E - \prod_{j=1}^m (\E \L_{W,j}) \E \big ) \bigg ) \bigg ) \bigg \|
\\ & \leq \sum_{m=0}^k \frac{(-1)^m}{k^m} \sum_{W \in WS(m+1,k)}
      \bigg \| \bigg (  \Big ( \prod_{j=1}^m (\Phi_{W(j)} \L_{W,j}) \Big ) \Phi_{W(m+1)} \E - \prod_{j=1}^m (\E \L_{W,j}) \E \big ) \bigg ) \bigg \|
\\ & \leq \sum_{m=0}^k \frac{(-1)^m}{k^m} \sum_{n=1}^k \sum_{W \in WS(m+1,k,n)}
      \bigg \| \bigg (  \Big ( \prod_{j=1}^m (\Phi_{W(j)} \L_{W,j}) \Big ) \Phi_{W(m+1)} \E - \prod_{j=1}^m (\E \L_{W,j}) \E \big ) \bigg ) \bigg \|
    \pl,
\end{split}
\end{equation*}
where $W \in WS(m+1,k,n)$ counts all partionings with minimum inveral size $n$. Combining with Equations \eqref{eq:termnum} and \eqref{eq:singleterm},
\begin{equation} \label{eq:zlemmamain}
\begin{split}
& \bigg \| \sum_{m=0}^k \frac{(-1)^m}{k^m} \bigg (  \sum_{W \in WS(m+1,k)} \bigg (  \Big ( \prod_{j=1}^m (\Phi_{W(j)} \L_{W,j}) \Big ) \Phi_{W(m+1)} \E - \prod_{j=1}^m (\E \L_{W,j}) \E \big ) \bigg ) \bigg ) \bigg \|
\\ & \leq \Big ( \frac{1}{k} + \frac{1}{k^2} \Big ) \Big ( \sum_{n=0}^k  \epsilon^{\lfloor n / g \rfloor} \Big ) \sum_{m=1}^k \|\L \|^m\frac{(m+1) m}{(m-1)!} \pl.
\end{split}
\end{equation}
Estimating the first sum,
\begin{equation} \label{eq:nsum}
    \sum_{n=0}^k  \epsilon^{\lfloor n / g \rfloor} \leq g \sum_{n=0}^{\lceil k/g \rceil } \epsilon^n \leq \frac{g}{1-\epsilon} \pl.
\end{equation}
For the second sum,
\begin{equation} \label{eq:msummain}
    \sum_{m=1}^k \|\L \|^m \frac{(m+1) m}{(m-1)!} = \sum_{m=1}^k \|\L \|^m \frac{m^2}{(m-1)!} + \sum_{m=1}^k \|\L \|^m \frac{m}{(m-1)!} \pl.
\end{equation}
Note that whenever a sum starting at $m=0$ contains a factor of $m$ in all terms, it equivalently starts at $m=1$. The first term in Equation \eqref{eq:msummain} decomposes as
\begin{equation} \label{eq:msumterm1}
\begin{split}
& \sum_{m=1}^k \|\L \|^m\frac{m^2}{(m-1)!} = \|\L\| \sum_{m=0}^\infty \frac{(m+1)^2}{m!} \|\L\|^m
    \\ & = \|\L\| \Big ( \sum_{m=1}^k \frac{m^2}{m!} \|\L\|^m + 2 \sum_{m=1}^k \frac{m}{m!} \| \L \|^m + \sum_{m=0}^k \frac{1}{m!} \| \L \|^m \Big ) \pl.
\end{split}
\end{equation}
The second term in Equation \eqref{eq:msummain} is equivalent to
\begin{equation} \label{eq:msumterm2}
\begin{split}
\sum_{m=1}^k \|\L \|^m \frac{m}{(m-1)!} & = \| \L \| \sum_{m=0}^k \frac{m+1}{m!} \| \L \|^m
         = \| \L \| \sum_{m=0}^k \frac{m}{m!} \| \L \|^m + \| \L \| \sum_{m=0}^k \frac{1}{m!} \| \L \|^m \pl,
\end{split}
\end{equation}
which also happens to equal the first term in parenthesis in the last line of Equation \eqref{eq:msumterm1}. Moreover,
\begin{equation} \label{eq:onem}
    \sum_{m=1}^k \frac{m}{m!} \|\L\|^m =  \sum_{m=1}^k \frac{1}{(m-1)!} \| \L\|^m
        \leq \|\L\| \sum_{m=0}^\infty \frac{1}{m!} \|\L\|^m = \|\L\| e^{\|\L\|} \pl.
\end{equation}
Therefore, the terms in Equation \eqref{eq:msumterm2} are equivalent to $(\| \L \|^2 + \| \L \|) e^{\| \L \|}$. Collecting terms from Equation \eqref{eq:msumterm1} yields $(\|\L\|^3 + 3 \| \L \|^2 + \|\L\|) e^{\|\L\|}$. Adding these expressions yields
\begin{equation} \label{eq:msumfinal}
    \sum_{m=1}^k \|\L \|^m\frac{m^2}{(m-1)!} \leq (\|\L\|^3 + 4 \|\L\|^2 + 2 \|\L\|) e^{\|\L\|} \pl.
\end{equation}
Equations \eqref{eq:nsum} and \eqref{eq:msumfinal} with \eqref{eq:zlemmamain} complete the Lemma.
\end{proof}
\begin{lemma} \label{lem:zenolemfinal}
Let $(\Phi_{m})_{m=1}^k$ be a family of norm contractions for any $k \in \NN$ all having fixed point projector $\E$. Let $(\L_m)_{m=1}^k$ be bounded Lindbladians such that $e^{-\L_m s}$ is contractive for all $s \in \RR^+$ and $m$. Let $\|\L\| := \max_m \|\L_m\|$. Assume for given $g \in \NN$, projection $\E$, and $\epsilon \in (0,1)$ that for any contiguous indices $1 \leq j_1, ..., j_n \leq k$, $\Phi_{j_1} ... \Phi_{j_n} \geq_{cp} (1 - \epsilon) \E$, and that $\Phi_j \E = \E \Phi_j = \E$ for each $j \in 1...k$. Then
\begin{equation*}
\begin{split}
& \Big \| \prod_{m=1}^k \big (\Phi_{m} \circ e^{- \L_m / k} \big ) \E - \prod_{m=1}^k e^{- \E \L_m \E / k} \E \Big \|
        \leq 2 k \epow{\| \L \| / k}{2} + \frac{g}{k} \Big ( 1 + \frac{1}{k} \Big ) \frac{\|\L\|^3 + 4 \|\L\|^2 + 2 \|\L\|}{1-\epsilon} e^{\|\L\|} \pl .
\end{split}
\end{equation*}
\end{lemma}
\begin{proof}
    Via Lemma \ref{lem:firstord},
\[ \bigg \| \prod_{m=1}^k \big (\Phi_{m} \circ e^{- \L_m / k} \big ) - \prod_{m=1}^k (\Phi_m (1 - \L_m / k)) \bigg \|
    \leq k \epow{\| \L \| / k}{2} \pl. \]
Then via the binomial expansion,
\begin{equation} \label{eq:phiktaylor}
\prod_{m=1}^k (\Phi_m (1 - \L_m / k)) = \sum_{m=0}^{k} \frac{(-1)^m}{k^m} \sum_{W \in WS(m+1,k)} \Big ( \prod_{j=1}^m (\Phi_{W(j)} \L_{W,j}) \Big ) \Phi_{W(m+1)} \pl,
\end{equation}
where $\Phi_{W(j)}$ and $\L_{W,j}$ are defined as in Lemma \ref{lem:zenolem}. Similarly,
\[ \bigg \| \prod_{m=1}^k \big (\Phi_{m} \circ e^{- \E \circ \L_m \circ \E / k} \big ) - \prod_{j=1}^m (1 - \E \L_m \E / k) \bigg \| \leq \sum_{m=1}^k \epow{\| \E \L_m \E \| / k}{2} \leq k \epow{\| \L \| / k}{2} \pl. \]
Again via the binomial expansion and using that $\E$ absorbs $\Phi_j$ for all $j$,
\begin{equation} \label{eq:phiktaylorE}
     \prod_{j=1}^m (1 - \E \L_m \E / k) = \sum_{m=0}^{k} \frac{(-1)^m}{k^m} \sum_{W \in WS(m+1,k)} \Big ( \prod_{j=1}^m (\E \L_{W,j} \E) \Big ) \pl.
\end{equation}
Via idempotence of $\E$,
\[ \prod_{j=1}^m (\E \L_{W,j} \E) = \Big ( \prod_{j=1}^m (\E \L_{W,j}) \Big ) \E \]
Hence the desired norm is at most $2 k \epow{\| \L \| / k}{2}$ plus
\[ \bigg \| \sum_{m=0}^{k} \frac{(-1)^m}{k^m} \sum_{W \in WS(m+1,k)} \Big ( \prod_{j=1}^m (\E \L_{W,j}) - \Big ( \prod_{j=1}^m (\Phi_{W(j)} \L_{W,j}) \Big ) \Phi_{W(m+1)} \Big ) \E \bigg \| \pl. \]
Since the final $\E$ is a contraction, the expression in Lemma \ref{lem:zenolem} is an overestimate, completing this Lemma.
\end{proof}
\begin{theorem} \label{thm:zeno}
Let $(\Phi_{m})_{m=1}^k$ be a family of norm contractions for any $k \in \NN$. Let $(\L_m)_{m=1}^k$ be bounded Lindbladians such that $e^{-\L_m s}$ is contractive for all $s \in \RR^+$ and each $m \in 1...k$. Assume for given $g \in \NN$, projection $\E$, and $\epsilon \in (0,1)$ that for any contiguous indices $1 \leq j_1, ..., j_n \leq k$, $\Phi_{j_1} ... \Phi_{j_n} \geq_{cp} (1 - \epsilon) \E$, and that $\Phi_j \E = \E \Phi_j = \E$ for each $j \in 1...k$. Then given any $q \in \NN$ and $(r_m)_{m=0}^q$ such that $r_0 = 0$, $r_q = k$, and each $k_m := r_m - r_{m-1}$,
\begin{equation*}
\begin{split}
 \Big \| \prod_{m=1}^k & \big (\Phi_{m} \circ e^{- \L_m / k} \big ) \E - \prod_{m=1}^k e^{- \E \L_m \E / k} \E \Big \| 
    \leq \sum_{m=1}^q \max_{j \in r_{m-1} ... r_m}  \bigg ( 
      \\ &  2 k_m \epow{\| \L_j \| / k}{2} +  \frac{g}{k_m} \Big ( 1 + \frac{1}{k_m} \Big ) \frac{(\|\L_j \| k_m / k)^3 + 4 (\|\L_j \| k_m / k )^2 + 2 \|\L_j \| k_m / k }{1-\epsilon} e^{\|\L_j \| k_m / k} \bigg ) .
\end{split}
\end{equation*}
\end{theorem}
\begin{proof}
Via Lemma \ref{lem:tunsum},
\begin{equation*}
\begin{split}
    & \prod_{m=1}^k \big (\Phi_{m} \circ e^{- \L_m / k} \big ) - \prod_{m=1}^k e^{- \E \L_m \E / k}
    \\ &    = \sum_{j=1}^q \Big (\prod_{m=1}^{l_j} \Phi_{m} \circ e^{- \L_m / k} \Big )
    \Big ( \prod_{m=l_j}^{r_j} \big (\Phi_{m} \circ e^{- \L_m / k} \big ) - \prod_{m=r_{j-1}}^{r_j} e^{- \E \L_m \E / k} \Big )
         \Big ( \prod_{m=r_j}^{k} e^{- \E \L_m \E / k} \Big ) \pl.
\end{split}
\end{equation*}
Therefore, using that the left and right products are of contractions, that the right map commutes with the projection $\E$, that the norm is assumed submultiplicative, and the triangle inequality,
\begin{equation*}
\begin{split}
    & \Big \| \prod_{m=1}^k \big (\Phi_{m} \circ e^{- \L_m / k} \big ) \E - \prod_{m=1}^k e^{- \E \L_m \E / k} \E \Big \|
    \leq \sum_{j=1}^q 
    \Big \| \prod_{m=l_j}^{r_j} \big (\Phi_{m} \circ e^{- \L_m / k} \big ) \E - \prod_{m=l_j}^{r_j} e^{- \E \L_m \E / k} \E \Big \| \pl.
\end{split}
\end{equation*}
Note that $\L_m / k$ = $(k_m / k) \L_m / k_m$. Lemma \ref{lem:zenolemfinal} completes the first inequality.
\end{proof}

\begin{rem}
    As shown in Corollary \ref{cor:ctsfinal}, Theorem \ref{thm:zeno} implies a strong damping inequality with linear dependence on the characteristic timescale of damping (inverse dependence on the strength), linear dependence on time, and quadratic dependence on the norm of the damped Lindbladian. Analogously, Theorem \ref{thm:zeno} implies a generalized Zeno bound for discrete interruption scaling as $O(1/k)$, and where the dependence on $\|\L\|$ can be made quadratic via appropriate choice of $q$.
\end{rem}

\begin{rem} \normalfont \label{rem:alternate}
    As an alternate approach to prove a bound similar to Theorem \ref{thm:zvscmlsi}, assuming that $(\Phi^t)$ has $\lambda$-CMLSI, for any input density $\rho$,
    \begin{align*}
        D(\Phi^t(\E_0(\rho)) \| \Phi^t(\E(\rho))) \leq e^{- \lambda t} D(\E_0 (\rho) \| \E (\rho)) \pl,
    \end{align*}
    so
    \begin{align*}
        D(\E_0 (\rho) \| \E (\rho)) - D(\Phi^t(\E_0(\rho)) \| \Phi^t(\E(\rho)))
            \geq (1 - e^{- \lambda t}) D(\E_0 (\rho) \| \E (\rho)) \pl.
    \end{align*}
    Using the bound from Lemma \ref{lem:wsb} and Equation \eqref{eq:pprelent}, there is an input density $\rho$ for which
    \begin{align*}
        \delta_t \log C(\E) + (1 + \delta_t) h \Big ( \frac{\delta_t}{1+\delta_t} \Big ) 
            \geq (1 - e^{- \lambda t}) D(\E_0 (\rho) \| \E (\rho)) \pl.
    \end{align*}
    Therefore,
    \begin{equation} \label{eq:zvc1}
        \lambda \leq - \frac{1}{t} \ln \bigg ( 1 - \frac{1}{\log  D(\E_0 (\rho) \| \E (\rho))} \Big ( \delta_t \log C(\E) + (1 + \delta_t) h \Big ( \frac{\delta_t}{1+\delta_t} \Big ) \Big ) \bigg ) \pl.
    \end{equation}
    To leading order in small $\delta_t$,
    \begin{align*}
        \lambda & \lesssim \frac{1}{t \log C(\E_0 \| \E)} \Big ( \delta_t \log C(\E) + (1 + \delta_t) h \Big ( \frac{\delta_t}{1+\delta_t} \Big ) \Big ) \bigg )
         \approx \frac{\tilde{\delta}}{\log C(\E_0 \| \E)} \Big (  \log C(\E) + \log (1 / t \tilde{\delta}) \Big ) \pl.
    \end{align*}
    Though the obtained bound is optimized by setting $t$ large, doing so re-introduces higher-order corrections in $\delta_t$. Hence while this bound requires small $\tilde{\delta}$ and therefore also contrains the relevant time range in terms of it, the bound does not yield a $1/\lambda_0$ scaling of $\lambda$ even for large $\lambda_0$.
\end{rem}

\end{document}